\newtheorem{Claim}{Claim}
\newtheorem{Theorem}{Theorem}
\newtheorem{Remark}{Remark}
\newtheorem{Example}{Example}
\renewcommand{\paragraph}[1]{\smallskip\noindent\textbf{#1}}
\newtcolorbox{mybox}[1]{%
    tikznode boxed title,
    enhanced,
    arc=0mm,
    interior style={white},
    attach boxed title to top center= {yshift=-\tcboxedtitleheight/2},
    fonttitle=\bfseries,
    colbacktitle=white,coltitle=black,
    boxed title style={size=normal,colframe=white,boxrule=0pt},
    title={#1}}
\newcommand*{\addFileDependency}[1]{
  \typeout{(#1)}
  \@addtofilelist{#1}
  \IfFileExists{#1}{}{\typeout{No file #1.}}
}
\newcommand{\pos}{PoS System}
\newcommand{\psg}{\textsc{PSG}}
\newcommand{\proname}{\textsc{W2SB}}
\newcommand{\ourmeasure}{\textsc{C-NORM}}
\title[A Game-Theoretic Analysis of Bootstrapping Protocols]{Centralization in Proof-of-Stake Blockchains: A Game-Theoretic Analysis of Bootstrapping Protocols}
\author{Varul Srivastava}
\affiliation{
  \institution{International Institute of Information Technology}
  \city{Hyderabad}
  \country{India}}
\email{varul.srivastava@research.iiit.ac.in}
\author{Sankarshan Damle}
\affiliation{
  \institution{International Institute of Information Technology}
  \city{Hyderabad}
  \country{India}}
\email{sankarshan.damle@research.iiit.ac.in}
\author{Sujit Gujar}
\affiliation{
  \institution{International Institute of Information Technology}
  \city{Hyderabad}
  \country{India}}
\email{sujit.gujar@iiit.ac.in}
\begin{abstract}
Proof-of-stake (PoS) has emerged as a natural alternative to the resource-intensive Proof-of-Work (PoW) blockchain, as was recently seen with the Ethereum Merge. PoS-based blockchains require an initial stake distribution among the participants. Typically, this initial stake distribution is called bootstrapping. This paper argues that existing bootstrapping protocols are prone to centralization. To address centralization due to bootstrapping, we propose a novel game $\Gamma_\textsf{bootstrap}$. Next, we define three conditions: (i) Individual Rationality (IR), (ii) Incentive Compatibility (IC), and (iii) $(\tau,\delta,\epsilon)-$ Decentralization that an \emph{ideal} bootstrapping protocol must satisfy. $(\tau,\delta,\epsilon)$ are certain parameters to quantify decentralization. Towards this, we propose a novel centralization metric, C-NORM, to measure centralization in a PoS System. We define a centralization game -- $\Gamma_\textsf{cent}$, to analyze the efficacy of centralization metrics. We show that C-NORM effectively captures centralization in the presence of strategic players capable of launching Sybil attacks. With C-NORM, we analyze popular bootstrapping protocols such as Airdrop and Proof-of-Burn (PoB) and prove that they do not satisfy IC and IR, respectively. Motivated by the Ethereum Merge, we study W2SB (a PoW-based bootstrapping protocol) and prove it is ideal. In addition, we conduct synthetic simulations to empirically validate that W2SB bootstrapped PoS is decentralized. 
\end{abstract}
\newcommand{\BibTeX}{\rm B\kern-.05em{\sc i\kern-.025em b}\kern-.08em\TeX}
\begin{document}


\pagestyle{fancy}
\fancyhead{}


\maketitle 


\section{Introduction}
\label{sec:introduction}
\emph{Blockchain}, first introduced with Bitcoin \cite{Nakamoto2009}, is an append-only, distributed ledger. To maintain a distributed ledger, we need a \emph{distributed consensus} -- i.e., all the parties (which we call players) agreeing on a single state of the ledger. \emph{Public} blockchains employ consensus protocols such as Proof-of-Work (PoW), Proof-of-Stake (PoS), Proof-of-Elapsed-Time (PoET), among others~\cite{consensusSurvey}. All these protocols rely on \emph{incentive engineering} to achieve consensus. In PoW blockchains, the players solve cryptographic puzzles to propose the next block to be added to the ledger. In return, it obtains certain newly minted coins as a reward -- \emph{block reward}. With such {incentive engineering}, PoW achieves consensus if more than 50\% of players are honest, whereas traditional consensus protocols require $\frac{2}{3}^{\text{rd}}$ players to be honest~\cite{Castro1999pBFT}. The blockchain community is transitioning to more energy-efficient Proof-of-Stake (PoS) protocols (e.g. Ethereum~\cite{buterin2013ethereum}, the second largest blockchain by market capitalization and share~\cite{marketcap}, shifted to PoS via the Merge~\cite{ethereumMerge}). In PoS, a block proposer is selected with probability proportional to its stake in the system and awarded the new coins. With incentives involved, the strategic players would maximize their rewards, which might affect the security of the protocols. This paper focuses on centralization that might happen in PoS consensus protocols.

\noindent\textbf{Centralization in PoS blockchains -- Practical Examples}  Polygon Hard fork~\cite{governanceProblem} received criticism because it required signatures of just $13$ validators out of a quorum of $100$ due to centralization of resources. Furthermore, popular PoS-based blockchains such as ICON~\cite{icon}, Tezos~\cite{tezos}, Cosmos~\cite{cosmos}, and Irisnet are highly centralized (top $10$ stakeholders hold $\geq{1}/{3}$ of the total stake~\cite{centralizationSource}). This centralization exposes them to the risk of potential attacks\footnote{E.g., the Luna Crash~\cite{luna} resulted from offloading a massive volume of USDT in a very short interval. We suspect such coordinated de-pegging might be a repercussion of the centralization in the system.}.

\noindent\textbf{Why is Decentralization Important?} Governance and security of blockchains rely on the decentralization of resources among players. If a player/coalition collects disproportionate rewards, it might compromise the correctness of protocols~\cite{backbone,algorand}. In PoS blockchain, centralization is when a coalition (aka stake pool) holds a high amount of stake, exceeding their valuation for the system. A centralized PoS system is susceptible to threats such as denial of service (DoS), blacklisting, and double-spending~\cite{securitySurvey}, etc. 

\noindent Among the multiple causes of centralization in PoS, two major causes are (1) Centralization due to initial stake distribution --- called a \emph{bootstrapping} protocol, and (2) centralization due to institutions like exchanges and staking pools. This paper primarily focuses on bootstrapping. We argue that ineffective bootstrapping can cause centralization in PoS, implying the crucial role of bootstrapping protocol. Thus, there is a need to analyze decentralization via bootstrapping protocols in the presence of strategic players.

\subsection*{Our Approach} 

\paragraph{The Bootstrapping Model.} We begin by providing a novel construction of the bootstrapping of a PoS-based blockchain as a game, namely $\Gamma_\textsf{bootstrap}$ (Section~\ref{sec:game}) among participating players. We propose that the bootstrapping protocol should satisfy: (i) \emph{Individual Rationality} (IR) -- participating in the protocol should give higher utility than abstaining from it; (ii) \emph{Nash Incentive Compatibility} (IC) -- following the bootstrapping protocol should be a Nash Equilibrium for all players and, (iii) \emph{Decentralization} (DC), -- the PoS system should be decentralized after the bootstrapping protocol. We define a bootstrapping protocol to be \emph{ideal} if it simultaneously satisfies all three conditions (Definition~\ref{def:idealbootstrappingprotocol}). We need a measure to quantify centralization in \pos\ to assert decentralization. 

\paragraph{Quantifying Centralization.} An ideal bootstrapping protocol's central goal must be to distribute the initial stake such that the PoS system is not centralized. Towards this, we use the notion of $(\tau,\delta,\epsilon)-$decentralization~\cite{awesomeDecentralization}. Naturally, to quantify centralization, we require an effective metric. It must be resilient to a player's \emph{strategic} behaviour. Unfortunately, prior works on quantifying centralization in blockchains, such as using the Gini coefficient~\cite{GiniCoeff,gochhayatGini}, the Nakamoto Coefficient~\cite{centralizationMetricsGeneral2} or Entropy-based metrics~\cite{entropyDecentralization} are susceptible to manipulations by players who might game the metric to falsely make the system appear decentralized. 

\paragraph{\ourmeasure.} Given $\Gamma_\textsf{bootstrap}$, the definitions of IR and IC follow the standard template. However, it is challenging to construct a strategyproof centralization metric. To this end, we propose the construction of a $\mathcal{G}$ -- directed cyclic graph-- DAG-based representation of \pos\ transactions. $\mathcal{G}$ aims to represent (i) the current state and (ii) the history of the PoS system.
With $\mathcal{G}$, we propose our novel centralization metric, \ourmeasure\ (Section~\ref{ssec:ourmeasure}). \ourmeasure\ incorporates the strategic behaviour of the players. Moreover, the metric also captures the dynamic arrival of the players to the PoS system, which corresponds to dynamic updates to $\mathcal{G}$. 

\paragraph{The Centralization Game $(\Gamma_\textsf{cent})$.} 
We introduce $\Gamma_\textsf{cent}$, a game that provides a unified framework to measure the effectiveness of any centralization metric, given $\mathcal{G}$ (Section~\ref{ssec:decentralization-game}). $\Gamma_\textsf{cent}$ is a two-player game between a Metric Challenger and a Metric Descriptor. If the Metric Descriptor can distinguish between the centralized and decentralized (less centralized) systems presented by the challenger, then the descriptor wins the game. Otherwise, the metric is inept. We show that for famous blockchain centralization metrics -- Nakamoto coefficient~\cite{centralizationMetricsGeneral2}, Entropy~\cite{entropyDecentralization}, and Gini Coefficient~\cite{gochhayatGini}, the Metric Descriptor loses the game. However, for \ourmeasure, the Metric Descriptor wins with a very high probability (Theorem~\ref{thm:cnorm-game}). We also show that if a protocol is IC and has low \ourmeasure, then it satisfies our definition of $(\tau,\delta,\epsilon)-$ Decentralized (Theorem~\ref{thm:cnorm-decentralized}).

\noindent\textbf{Analyzing Bootstrapping Protocols.} Given $\Gamma_\textsf{bootstrap}, \mathcal{G}$ and $\Gamma_\textsf{cent}$, we proceed to discuss if the existing bootstrapping protocols are ideal or not (Section~\ref{sec:analysis-existing-protocols-w2sb}). First, we prove that \emph{Airdrop}~\cite{Airdrop} and \emph{Proof-of-Burn} (PoB)~\cite{PoB2020} are not ideal as they do not satisfy IC and IR, respectively. Second, we prove that a PoW-based bootstrapping protocol\footnote{The recent Ethereum Merge~\cite{ethereumMerge} motivates the study of PoW-based bootstrapping for PoS blockchains.} (abbreviated as \proname) simultaneously satisfies IR, IC, and DC and hence is ideal. Table~\ref{tab:comparison-table} summarizes these results.

We validate our findings in Section~\ref{sec:experimental-analysis} with experimental simulations of PoS blockchains bootstrapped with \proname\ for different valuation distributions and duration to achieve different levels of decentralization based on the random arrival of the players. 

\paragraph{Our Contributions.} 
We make the following contributions.

\begin{enumerate}[leftmargin=*]
    \item We introduce a game-theoretic model for bootstrapping of \pos. Our model incorporates (i) dynamic participation and (ii)  Sybil-attacks by allowing coalition formation. 
    
    \item We define a bootstrapping protocol as ideal if it simultaneously satisfies IR and IC and $(\tau,\delta,\epsilon)$-DC (Definition~\ref{def:idealbootstrappingprotocol}).

    \item To verify if a protocol is decentralized, we introduce \ourmeasure\ (Definition~\ref{def:ourmeasure}). The definition relies on a Directed Acyclic Graph (DAG) based modelling of the PoS system.
  
    \item  We model the centralization game $\Gamma_\textsf{cent}$ to evaluate effectiveness of centralization metrics in Section~\ref{ssec:decentralization-game}. We show that when the metric descriptor chooses other metrics, they do not win the game while using \ourmeasure\ they win the game $\Gamma_\textsf{cent}$ with very high probability (Theorem~\ref{thm:cnorm-game}). We also show that a PoS System is $(\tau,\delta,\epsilon)-$decentralized if \ourmeasure\ value is low (Theorem~\ref{thm:cnorm-decentralized}). 
  
    \item Finally, we analyze existing bootstrapping protocols to show that Airdrop is not IC (Claim~\ref{claim:airdrop-no-ic}) and Proof-of-Burn is not IR (Claim~\ref{claim:pob-no-ir}). We then show that \proname\ is IR and IC (Lemma~\ref{lemma:pow-ic-ir}) and can achieve any \ourmeasure\ value in $(0,1]$ within a finite number of rounds (Theorem~\ref{thm:round-complexity}) Based on these results, we show that \proname\ is an ideal bootstrapping protocol (Theorem~\ref{thm:w2sb-ideal}). 

\end{enumerate}

\begin{table}[t]
\begin{small}
\centering
\adjustbox{max width=\columnwidth}{%
\begin{tabular}{c | c c c | c}
\toprule
\textbf{Bootstrapping Protocol} & \textbf{IR} & \textbf{IC} & \textbf{DC} & \textbf{Ideal Bootstrapping}\\
\midrule
AirDrop~\cite{Airdrop} & \checkmark & \color{red}$\times$ & \color{red}$\times^{\star}$ & \color{red}$\times$ \\
Proof-of-Burn~\cite{PoB2020} &  \color{red}$\times$ &  \checkmark & \checkmark$^\dagger$ & \color{red}$\times$  \\
\textbf{\proname} &  \checkmark &  \checkmark &  \checkmark & \checkmark \\ 
\bottomrule
\multicolumn{5}{l}{$\dagger$: Depends on the parent cryptocurrency} \\
\multicolumn{5}{l}{$\star$: We conjecture that it is decentralized, but no existing metric can yet assert to it}
\end{tabular}}
\caption{\proname\ compared with existing bootstrapping protocols. An Ideal bootstrapping protocol satisfies IR, IC, and DC. }\label{tab:comparison-table}
\end{small}
\end{table}

\section{Related Work}
\label{sec:related-work}

\paragraph{Bootstrapping in Blockchains.} Proof-of-Stake (PoS) blockchain protocols employ various methods to bootstrap their networks, e.g., Initial Coin Offering (ICO)~\cite{challengesICO,ICO2020Momtaz}, Airdrop~\cite{Airdrop}, Proof-of-Burn~\cite{PoB2020}, and others \cite{Ethanos2021,OmniledgerBootstrap,VaultBootstrap} -- which focus on optimizing miners computational, communication, and memory costs. Airdrops are vulnerable to Sybil Attacks~\cite{AirdropSybil} and Privacy Leaks~\cite{AirdropPrivacy}. Smartdrop~\cite{smartdropArticle}, lacks a game-theoretic analysis of its security and effectiveness. Ethereum~\cite{buterin2013ethereum,wood2014ethereum}, originally a Proof-of-Work (PoW) cryptocurrency, transitioned to a PoS-based blockchain in 2022 during the``Merge event''~\cite{ethereumMerge}. Though it was not by design, Ethereum's PoW to PoS transition appears serve as a good bootstrapping protocol. 

Game-theoretic analysis of blockchain protocols is gaining traction~\cite{blockchainPoA,elaineShiTFM,DavideAAMAS22,kiayiasNash,roughgarden2,roughgardenTFM1} but still lacks a formal game-theoretic framework for analyzing bootstrap protocols.
The need for such a formal study is crucial as bootstrapping is one of the significant causes of centralization in PoS-based blockchains~\cite{DecentralizationTrailofBits,centralizationSurvey,centralizationSource}.

\paragraph{Centralization in Networks and Multi-Agent Systems.} 
There has been extensive research on quantifying centralization in different multi-agent systems~\cite{multiAgentCentralizationGame,centralityGeneral1}. While some of these centrality measures and analyses are for dynamic systems such as social networks, where agents do not behave strategically to game the system~\cite{centralityShapley,betweennessCentralization,narhariCentralization}, there are centrality measures that discuss rational parties who try to game the metric~\cite{coalitionGTCentrality,centralizationSecurity,centralitySecurity2}. 
Among these, Istrate \emph{et al.}~\cite{coalitionGTCentrality} discuss parties trying to game the system through coalitions and analyze network centrality.
However, the work quantifies centrality in networks, while we quantify in a system with each node having its own private and public valuations, forming coalitions among themselves (i.e., Sybil attacks). Moreover, it is intractable to compute the metric. 
Other metrics such as~\cite{centralizationSecurity} model the game as a Stackelberg Game where parties are trying to spoof the metric. However, the players are restricted to changing links in the networks, unlike our system, where they can report false valuations and perform stake redistribution (equivalent to weight redistribution among network nodes). Several metrics for measuring centralization exist in the blockchain literature as well \cite{GiniCoeff,gochhayatGini,centralizationMetricsGeneral1,centralizationMetricsGeneral2,entropyDecentralization}. E.g., metrics based on the Gini Coefficient~\cite{GiniCoeff} or the Nakamoto Coefficient~\cite{centralizationMetricsGeneral2}. We analyze the efficacy of these metrics in Section~\ref{sec:cnorm} and show that they fail in particular cases.

Kwon \emph{et al.}~\cite{awesomeDecentralization} focuses on achieving perfect decentralization in blockchain protocols such as PoW, PoS, and dPoS. 
We note two significant differences between  \citet{awesomeDecentralization} and our work. (i) we analyze centralization caused during bootstrapping whereas they analyze centralization in PoW and PoS systems post-bootstrapping. (ii) Our utility model penalizes centralization
while Kwon \emph{et al.} assume the protocol is Sybil-resistant, which is a strong assumption.

Rauchs and Hileman~\cite{Hileman17} study centralization in the application layer due to the concentration of stake with exchanges and third-player wallet services. Further, there also exists analysis on the governance layer of different blockchain protocols~\cite{empericalGovernance,isBitcoinDecentralized}. Note that \ourmeasure\ is more general, encompassing the entire PoS system and capturing the stake distribution across different accounts. Other forms of centralization, e.g., due to increased storage costs \cite{storageCentralization,storageCentralizationMetric}, are out of the scope of this paper. For further details, an extensive survey on blockchain centrality measures can be found in~\cite{centralizationSurvey}.

\section{Preliminaries}
\label{sec:prelims}


In this section, we (i) summarize relevant blockchain preliminaries and (ii) existing centralization metrics in the literature.

\subsection{Blockchain Preliminaries}\label{ssec:blockchain-prelims}

A blockchain system $\mathcal{B}$ is a decentralized ledger maintained by interested players $\mathcal{P} = \{p_{1},p_{2},\ldots,p_{n}\}$. They need to agree on the state of the ledger which can be achieved through Proof-of-Work (PoW), Proof-of-Stake (PoS), Proof-of-Burn (PoB) etc. We briefly discuss such consensus algorithms below.

\paragraph{Proof-of-Work~(PoW)~\cite{Nakamoto2009}.}
A blockchain-based on PoW comprises a consensus mechanism in which players (aka \emph{miners}) compete to solve a cryptographic puzzle to propose the next block. 
A miner mines blocks in discrete periods called \emph{round}. If a miner's query is successful, it gets a reward $r_{\text{b}}$ in that round. Each miner $p_{i}$'s probability of mining a block is $a_{i}$, the fraction of the total mining power it controls. 
Next, let $\chi$ denote the expected cost incurred per unit of mining power in a single round. If the total mining power is $M$, then the cost incurred in one round by miner $p_{i}$ is $a_{i}M\cdot\chi$. The expected payoff for the miner $i$ becomes $a_{i}\cdot(r_{\text{b}} - M\chi)$. 

\paragraph{Proof-of-Stake~(PoS)~\cite{posOuroboros2018}.}
The energy-intensive nature of PoW has led to PoS-based blockchains gaining traction due to their significantly lower carbon footprint~\cite{energyMerge,energyBlockchain}. 
In PoS, players `enroll' themselves for block creation by locking their \emph{stake}. Then, the probability of a player getting selected to propose the next block is proportional to its relative stake. 
PoS-based blockchains face several challenges, including centralization, fairness, and security concerns~\cite{centralizationSource}. Notice that a `decentralized' PoS-based blockchain is one where each player has a stake proportional to its \emph{valuation} for the system. This valuation, in theory, is reflected by the stake held (equivalently, the money invested) by the player. Thus, for any player $i$, the ratio of the stake $\omega_{i}$ held by it and its valuation $\theta_{i}$ should be the same for all $p_{i}$. We call this ratio as a \emph{true effective stake} of a system, and for each player, $i$ denote it as $\beta_{i} = \frac{\omega_{i}/\theta_{i}}{\sum_{j\in[n]} \omega_{j}/\theta_{j}}$. 
%
PoS blockchains require initial stakes distribution for the consensus mechanism to start, achieved through bootstrapping. 
We now discuss some bootstrapping protocols employed in practice.

\paragraph{Proof-of-Burn~(PoB)~\cite{PoB2020}.} 
In PoB, players `burn' a fraction of their cryptocurrency i.e. transfer of the currency to a provably unspendable address~\cite{PoB2020}.  
In PoB bootstrapped PoS system, a player burns $y$ fraction of a certain old cryptocurrency and obtains $x$ fraction of bootstrapped cryptocurrency's tokens.

\paragraph{Airdrop~\cite{AirdropSybil}.} Airdrop distributes digital tokens/assets to eligible players. The eligibility is based on certain predetermined conditions specified by the cryptocurrency organization (e.g., posting a tweet). Airdrop, while useful for marketing a new cryptocurrency, suffers from privacy problems and is prone to \emph{Sybil attacks}~\cite {AirdropPrivacy,AirdropSybil}.


\paragraph{Sybil Attack~\cite{sybilAttack2002}.}
Sybil attack is when an attacker creates pseudo-identities that enable them to gain higher utility.
In blockchain systems, a Sybil attack increases the reward the attacker obtains. A blockchain with player $p_{i}$ and pseudo-identities $e_{1},e_{2},\ldots,e_{k}$ and utility for $k$ being $U_{k}$ is prone to Sybil-attack if $U_{p_{i}} < \sum_{j=1}^{k}U_{e_{j}}$.



\subsection{Centralization Metrics}\label{ssec:metrics-prelims}

\noindent The following metrics are used to study centralization in blockchains. 

\paragraph{Entropy-based Centralization.}
It is an information-theoretic metric to quantify centralization~\cite{centralizationMetricsGeneral1,centralizationMetricsGeneral2,entropyDecentralization}. The entropy of a system $H$, normalized to the range $[0,1]$ is
\begin{equation}\label{eqn:entropy}
    H := - \frac{1}{\log(|P|)}\sum_{p_{i} \in P} \beta_{i} \log(\beta_{i})
\end{equation}

\paragraph{Gini Coefficient.}
It is a classic measure from economics that quantifies inequality in resource distribution using $G \in [0,1]$ where $0$ represents perfect equality, and $1$ represents maximal inequality. 
The Gini coefficient for a blockchain system $\mathcal{B}$ is~\cite{GiniCoeff,gochhayatGini,centralizationMetricsGeneral1,centralizationMetricsGeneral2}
\begin{equation}\label{eqn:gini}
    G := \frac{1}{2|P|} \sum_{p_{i} \in P}\sum_{p_{j} \in P} |\beta_{i} - \beta_{j}|
\end{equation}

\paragraph{Nakamoto Coefficient.}
Nakamoto coefficient~\cite{centralizationMetricsGeneral1,centralizationMetricsGeneral2,centralizationBlog}, a recent metric for measuring decentralization, is the minimum number of players in the system that together control the majority (i.e. more than some threshold $\tau_{th} \in [0,1]$) of the system. $\tau_{th}$ is protocol-dependent and is usually the minimum control required to disrupt the protocol. Mathematically,
for a blockchain system $\mathcal{B}$, $N$ is,
\begin{equation}\label{eqn:nakamoto-coefficient}
    N := \min \left\{|A|\;:\;\sum_{p_{i} \in A} \beta_{i} > \tau_{th}, A \in 2^{P}\right\} 
\end{equation}

\section{Bootstrapping in \pos}
\label{sec:game}
In this section we formulate bootstrapping as a game $\Gamma_{bootstrap}$ to study the influence of rational agents (which can launch sybil attack) on the performance of the bootstrapping protocol. Following this, we discuss the game-theoretic properties desired from this Game and correspondingly define the requirements for a bootstrapping protocol to be an \emph{ideal bootstrapping protocol}.

Let $\Pi$ be a bootstrapping protocol to mint the initial coins in \pos, which runs till round $T$, which we call \emph{stopping time}. After $T$ rounds, $\Pi$ terminates, and the system runs a PoS protocol. 
Each player $p_{i}\in P$ has a private valuation of \pos, $\theta_{i} \in \mathbb{R}_{\geq 0}$, on joining the system. $\theta_i$ indicates the amount that player $p_i$ will prefer to invest in \pos. The private valuation profile of all the agents is $ \bm{\theta} := (\theta_{1},\theta_{2},\ldots, \theta_{n})$. 

Each player $p_{i}$ elicits its valuation as $\hat{\theta}_{i} \in \mathbb{R}_{\geq 0}$ (through some preference aggregation procedure which is part of $\Pi$). E.g., by simply obtaining the coins of the underlying cryptocurrency. The set of reported valuations is $\bm{\hat{\theta}} := ( \hat{\theta}_{1}, \hat{\theta}_{2}, \ldots \hat{\theta}_{n} )$. In a \pos, the goal is to ensure each player holds a stake proportional to its valuation of the system (see discussion in Section~\ref{ssec:blockchain-prelims}). Thus, $\Pi$ must ensure that at the end of round $T$, the stake held by each player $p_{i}$ is proportional to their reported valuation of the system, i.e., $\hat{\theta}_{i}$. 

Moreover, a subset of players may collude to earn coins more than their fair share. We assume that the valuations are \emph{additive}, so the total true (private) valuation of any collusion set $A\subset P$ is $\vartheta(A) = \sum_{i \in A} \theta_{i}$. To model such strategic behavior, our model incorporates colluding as part of the player's strategy space. Certainly, collusion may lead to centralization in the \pos. We assume that $\Omega$ is a measure available with the \pos\ to quantify centralization. The \pos\ $\mathcal{B}$ is represented by the tuple $(\Pi,\Omega,P)$.

\subsection{Strategy Space}
\label{ssec:strategy}
As players are strategic, each player $p_i$'s goal in the system $\mathcal{B}$ is to maximize its stake while keeping the recorded value of the system's centralization (or $\Omega$) as low as possible. Towards this, it makes two choices, (1)  $\hat{\theta}_{i}$ and (2) if it should collude with the others. 

\paragraph{Sybil-attack.} The set $P$ denotes all the identities part of the \pos. However, if $p_{i}$ has launched a Sybil attack and created pseudo-identities, we represent them in our game as partition sets --- each partition being the group of sybil identities (sybil-group) of one rational player. These sybil-groups are labelled and represented by the set $\mathcal{A} := \{A_{1}, A_{2},\ldots, A_{n}\}$ where $A_{i}$ is the partition (name of the set) which $p_{i}$ is part of. $\mathcal{A}$ follows following three properties: (1) $|A_{i}| \in \{1,\ldots,n\}$. (2) $A_{i} \cap A_{j} = \emptyset$ if $p_i$ and $p_j$ are in different partition and $A_{i} = A_{j}$ otherwise. (3) $\cup_{i=1}^{z} A_{i} = P$. Through such collusion, players can perform redistribution of stake, making the system appear more decentralized to the centralization metric $\Omega$. The redistribution of the stake cannot increase the allocated stake; therefore, $\sum_{i \in A} \hat{\theta}_{i}$ remains constant. To summarize, the strategy space for $p_i$ is $M = \mathbb{R}_{\geq 0}\times\;\mathcal{A}_{i}$; a cartesian product of reported stake and set of all possible partitions containing $p_{i}$.

\subsection{Utility Structure}
\label{ssec:utility-structure}

Given a \pos\ system $\mathcal{B}$, we now quantify players' utilities. Each player's utility depends on (1) $\Pi$ and (2) the level of system centralization (measured by $\Omega$), as the exchange rate for the cryptocurrency and, therefore, the utility drops for a more centralized (and therefore prone to attacks and censorship) cryptocurrency. 

The earned reward for any player $p_i$ depends on its reported valuation and the net gain of the PoS system. This reward is the difference between the total collective rewards made by the \pos\ ($r_{\text{block}}$) and the total cost incurred by it ($b_{\text{spent}}$). We capture the loss in the stake's external value due to centralization via a function $g(\theta_{i})$ that is non-decreasing in $\theta_{i}$. Thus, for the described setup, the utility for each player $p_{i}\in P$ with strategy ($\hat{\theta}_{i}, A_{i}$) is,
\begin{equation}\label{eqn:utility-general}
U_{i}(\hat{\theta}_{i},\hat{\bm{\theta}}_{-i},A_{i},\mathcal{A}_{-i};\theta_{i} ) := b\cdot\hat{\theta}_{i} - \Omega(\cdot)\cdot g(\theta_{i})
\end{equation}

In Eq.~\ref{eqn:utility-general}, $\mathcal{A}_{-i} = \mathcal{A}/\{A_{i}\}$ and $\hat{\bm{\theta}}_{-i} = \hat{\bm{\theta}}/\{\hat{\theta}_{i}\}$. The reported valuation (aggregated through $\Pi$) is $\hat{\bm{\theta}}$ and the true valuation is $\bm{\theta}$. The value $b$ depends on $\Pi$ and is independent of player strategy. $b = (r_{\text{b}} - b_{\text{spent}})\cdot\gamma$ where $r_{\text{b}}, b_{\text{spent}}$ and $\gamma$ are some constants dependent on $\Pi$. The total reward for player $p_{i}$ is proportional to the stake allocated to it (which in turn is proportional to $\hat{\theta}_{i}$). The \emph{cost of centralization} depends on $g(\theta_{i})$, which is a non-decreasing function of $\theta_{i}$ for player $p_{i}$. Note that $\Pi$ and $\Omega$ would be typically fixed for a blockchain system under study, so whenever clear from context, we omit mentioning these two in $U_i(\cdot)$.

As forming a collusion is part of strategy space, we can capture Sybil attacks, i.e., a player creating multiple identities and forming collusion with these identities.

In summary, the bootstrapping model for PoS system $\mathcal{B} = (\Pi,\Omega,P)$ can be defined by $\Gamma_\textsf{bootstrap} = \langle P, \bm{\theta},(U_{i})_{i \in [n]}\rangle$.

\subsection{Ideal Bootstrapping Protocol}
\label{ssec:ideal-general-def}
Given $\Gamma_{\textsf{bootstrap}}$, we now enlist the properties that $\Pi$ should satisfy to be an \emph{ideal} bootstrapping protocol. 

\subsubsection{Individual Rationality (IR).}  -- Each player $p_{i}$ prefers participating in $\Pi$ over abstaining from participation. The strategy of abstaining for any player $p_{i}\in P$ is $\hat{\theta}_{i} = 0$ and $A_{i} = \{i\}$. Formally, 
\begin{definition}[Individual Rationality (IR)]\label{def:ir}
$\Pi$ is Individually Rational (IR) if 
$\forall p_{i} \in P,\;\forall \bm{\theta} \in \mathbb{R}^{n}_{\geq 0},\;\forall\hat{\bm{\theta}}_{-i} \in \mathbb{R}^{n-1}_{\geq 0},\;\forall \mathcal{A}_{-i}$, 
\begin{equation}\label{eqn:ir}
    U_{i}(\theta_{i},\hat{\bm{\theta}}_{-i},\{i\},\mathcal{A}_{-i};\theta_{i}) > U_{i}(0,\hat{\bm{\theta}}_{-i},\{i\},\mathcal{A}_{-i};\theta_{i})
\end{equation}
\end{definition}

\subsubsection{Incentive Compatibility (IC)} 
IC requires, for each player $p_i$,$\hat{\theta}_{i} = \theta_{i}$ should give maximum expected utility given all other players are reporting their true valuations.
\begin{definition}[Nash Incentive Compatibility (IC)]\label{def:ic}
$\Pi$ is nash incentive compatible if $\forall p_{i} \in P, \forall \bm{\theta} \in \mathbb{R}^{n}_{\geq 0}, \forall \hat{\theta}_{i} \in \mathbb{R}_{\geq 0}, \forall A_{i} \in \mathcal{A}_{i}$, where $\mathcal{H} := \{\{1\},\{2\},\ldots,\{n\}\}$
\begin{equation}\label{eqn:ic}
    \mathbb{E}[U_{i}(\theta_{i},\bm{\theta}_{-i},\{i\},\mathcal{H}_{-i};\theta_i)] \geq \mathbb{E}[U_{i}(\hat{\theta}_{i}, \bm{\theta}_{-i},A_{i},\mathcal{{H}}_{-i}
    ;\theta_i)] 
\end{equation}
\end{definition}
\noindent The expectation taken is over the randomness in the system due to (i) random dynamic arrival of players and (ii) randomness of $\Pi$.

\subsubsection{Decentralization.} 
Lastly, we desire that the \pos\ be decentralized. For this, we motivate our definition partially from~\cite[Definition 4.1]{awesomeDecentralization} along with the condition of Sybil-resistance. We require (1) at least a fraction $\tau \in (0,1]$ of the total players to have joined the system, (2) the \emph{true scaled stake}\footnote{In PoS blockchain $\mathcal{B}$, $\beta_{\max}:=\max_{i\in[n]} \beta_{i}$ and $\beta_\delta:= \beta_{r}$ where $p_{r}$ is $\delta^{th}$ percentile player when arranged in decreasing order of stake.
} for each player $p_{i}$ should be as close to each other as possible (3) the system should be Sybil-resistant.
Our definition differs from~\cite[Definition 4.1]{awesomeDecentralization} as we capture each player's valuation using \emph{scaled stake} instead of the total stake held by the player. We also impose a condition of Sybil-proofness. We define decentralization as follows:


\begin{definition}[$(\tau,\delta,\epsilon)-$Decentralization]\label{def:decentralization}
    A PoS based blockchain system is said to be $(\tau,\delta,\epsilon)-$ \emph{decentralized} for $\tau \in (0,1]$, $\delta \text{ (percentile)} \in [0,100]$ and $\epsilon \in \mathbb{R}_{\geq 0}$ if it follows: 
    \begin{itemize}
        \item \textbf{Minimum Participation:} 
        The current set of players joined the PoS system by round $t$, $P_{t}$, is such that $\frac{|P_{t}|}{|P|} \geq \tau$. 
        \item \textbf{Proportionality:} The ratio of max true scaled stake ($\beta_{max}$) and the $\delta-$percentile true scaled stake ($\beta_{\delta}$) is $\frac{\beta_{max}}{\beta_{\delta}} \leq 1 + \epsilon$.
        \item \textbf{Sybil-proofness:} The ratio $\frac{\beta_{max}}{\beta_{\delta}}$ cannot be reduced by a single player forging identities without decreasing its utility. 
    \end{itemize}
\end{definition}

\smallskip \noindent Now we define an \emph{Ideal bootstrapping} protocol for a \pos. 

\begin{definition}[Ideal Bootstrapping Protocol]\label{def:idealbootstrappingprotocol}
    Given a set of players $P$ participating in a \pos\ with bootstrapping protocol $\Pi$, we call $\Pi$ as an \emph{ideal bootstrapping protocol} under a proposed metric $\Omega$ of measuring centralization of the PoS system, if it satisfies: 
    \begin{itemize}
        \item $\Pi$ is \emph{individually rational} for each player $p_{i} \in P$.
        \item $\Pi$ is Incentive Compatible, i.e. $\hat{\bm{\theta}} = \bm{\theta}$ is equilibrium strategy.
        \item After executing $\Pi$, the \pos\ is $(\tau,\delta,\epsilon)-$Decentralized.
    \end{itemize}
\end{definition}

\noindent To measure how decentralized a PoS system is, we need a metric to quantify centralization, which is the focus of the next section.

\section{Measuring Centralization in a \pos}
\label{sec:cnorm}
In this section we (i)model a given \pos\ $\mathcal{B}$ as a \emph{PoS System Graph} (\psg), (ii) discuss \ourmeasure\ and its properties..

\subsection{PoS System Graph}
\label{ssec:pos-graph}

Given a complete history of transactions in a PoS system $(\mathcal{B})$ (which is available as a public ledger), we discuss how we construct a \psg\ $(\mathcal{G})$ that captures strategic redistribution of stake among the colluding players. We later use $\mathcal{G}$ to define our metric \ourmeasure.

\subsubsection{Motivation.} The purpose of the \psg\ $(\mathcal{G})$ is to capture the following two properties of the \pos: (1) the current state stake owned by a particular player -- captured through the weights of each vertex in $\mathcal{G}$ and (2) the history, i.e., redistribution of stake -- captured through the weighted directed edges between different vertexes. Additionally, $\mathcal{G}$'s construction must only depend on publicly available information about $\mathcal{B}$.

\subsubsection{PoS System Graph (\psg).}  The Graph $\mathcal{G} = ([n], C, W)$ is a tuple consisting of a set of vertices $[n]$ where each vertex corresponds to a player in the system. Each vertex (player) $p_{i}$ has a total stake $c_{i}$ with it. We assign weights to these nodes as $\bm{C} =(c_1,c_2,\ldots,c_n)$. To capture the transactions summary between players $p_i$ and $p_j$, we place a weighted directed edge from vertex $i$ to $j$ with weight $w_{i,j}$. The set $W$ contains $w_{i,j}$ which are determined as follows:

\begin{itemize}[itemsep=0.3em,leftmargin=*]
    \item[$\bullet$] Define $\sum t_{i,j}$ as net positive transaction from $p_{i}$ (currently having stake $c_{i}$) to $p_{j}$ (with stake $c_{j}$) and $\sum t_{j,i}$ as net positive transaction from $p_{j}$ to $p_{i}$. If $\sum t_{i,j} > \sum t_{j,i}$ then there exists an edge from $p_{i}$ to $p_{j}$ with weight $w_{i,j} = \sum t_{i,j} - \sum t_{j,i}$ (see Figure~\ref{fig:graph-figure} in Appendix~\ref{app:cycle-elimination}). 
    \item[$\bullet$] If $\sum t_{i,j} = \sum t_{j,i}$ then there is no connection between $p_{i}$ and $p_{j}$ i.e. $w_{i,j} = w_{j,i} = 0$. By extension, $w_{i,i} = 0$ for all $i \in [n]$.
    \item[$\bullet$] Constructing such a graph might lead to a cycle (e.g. when $p_{1}$ transfers to $p_{2}$, $p_{2}$ transfers to $p_{3}$ and $p_{3}$ transfers to $p_{1}$). In such cases, a cycle-elimination procedure converts the graph to a Directed Acyclic Graph (DAG). 

\end{itemize}



\subsection{\ourmeasure: Quantifying Centralization in a PoS Blockchain System}
\label{ssec:ourmeasure}

We remark that most existing centralization metrics fail to capture either the collusion of players or the valuation of a player while calculating the centralization metric. 
A system is decentralized if\footnote{this notion of decentralization follows from the idea of proportional inceitive towards security (incentive to maintain security in PoS for any player is proportional to their valuation of the system). It has not been proven or discussed if this notion of valuation-proportional stake is necessary, but it is sufficient.} the stake held by each player is proportional to their valuation of the system. 
We propose a new metric for measuring centralization in a \pos. Our metric uses the PoS system graph $\mathcal{G} = ([n],C,W)$, a \psg\ derived from $\mathcal{B}$, as described in Section~\ref{ssec:pos-graph}. 

Towards quantifying centralization, we first calculate the \emph{effective stake} of a player -- capturing the net stake and the transaction value between the player and its neighbours in $\mathcal{G} $. Considering $nbr(i)$ to be the set of vertices adjacent to vertex $i$, the effective stake $\omega_{i}$ of player $p_{i}$ is calculated as: 
$\omega_{i} := c_{i} + \sum_{j \in nbr(i)} w_{j,i} - \sum_{j \in nbr(i)} w_{i,j}$.

\noindent\emph{Effective stake per unit of valuation} is denoted by $\frac{\omega_{i}}{\theta_{i}}$. For each $p_i\in P$, the normalized effective stake is called \emph{scaled stake}, denoted by $\beta_i$. \begin{equation}\label{eqn:scaled-stake}
    \beta_{i}(\theta) := \frac{\omega_{i}/\theta_{i}}{\sum_{j \in [n]} \big(\omega_{j}/\theta_{j}\big)}
\end{equation}

Since the allocated stake should be proportional to the reported valuation of each player $p_{i}$, $\frac{\omega_{i}}{\theta_{i}}$ should be the same for all players. For an IC mechanism $\Pi$, we should ideally have $\frac{\omega_{i}}{\theta_{i}} = \frac{\omega_{j}}{\theta_{j}} \;\forall\;i,j\in[n]$. Thus, the scaled stake should have the same value for each player in the PoS system for the system to be fully decentralized. Upon calculation, we find that $\beta_{i}$ should (in the ideal case) be $\frac{1}{n}$. Thus, for any player $p_{i}$, the ``error'' is the deviation of their {scaled stake} from this target value (in a decentralized system) of $\frac{1}{n}$. Hence, we define our metric for measuring centralization as the summation of $L_{1}$-Norm of this deviation for each player. The normalization factor $\frac{1}{2}$ maps the range of $\Omega$ to $[0,1]$. Formally,

\begin{definition}[C-Norm $(\Omega)$]\label{def:ourmeasure}
    For a PoS system with player set $P$, PoS system graph $G$ the C-Norm of the system $\Omega : [n]\times C \times W \rightarrow [0,1]$ is defined over $\bm{\theta}$ as
$        \Omega(\bm{\theta}) := \frac{1}{2}\sum_{j=1}^{[n]}\Big|\beta_{j}(\bm{\theta}) - \frac{1}{n} \Big| $.
    The centralization of the system in the worst-case is $\Omega^{\star}$ as:
    \begin{equation}\label{eqn:cnorm}
        \Omega^{\star} := \max_{\bm{\theta} | \theta_{i} > 0 \forall i \in [n]}\frac{1}{2}\sum_{j=1}^{[n]}\Big|\beta_{j}(\theta) - \frac{1}{n} \Big|
    \end{equation}
\end{definition}

\paragraph{Note.} While constructing the \psg\  $\mathcal{G}$, the order of elimination of cycles using the \texttt{CycleElimination} Algorithm (see Appendix~\ref{app:cycle-elimination}) has no effect on the value of \ourmeasure. This is because the difference between $\sum_{j \in Nbr(i)} t_{i,j} - \sum_{j \in Nbr(i)} t_{j,i}$ remains the same and therefore $\omega_{i}$, and by extension, $\Omega^{\star}$ does not change.




\paragraph{Need for C-Norm:} \ourmeasure\ has few advantages over previous metrics. While other metrics evaluate centralization based on the stake of each player, \ourmeasure\ include player's \emph{valuation} of the system, therefore defining effective stake. In addition, while other metrics use the current state of the system, \ourmeasure\ uses previous state changes (transactions) as well, which is captured through \pos\ Graph. Next, we formally argue why the proposed metric is good at capturing centralization in a \pos.

\subsection{Evaluating Decentralization Metrics}
\label{ssec:decentralization-game}

We begin by introducing a novel \emph{centralization game}, which assesses the efficacy of a given centralization metric for a \pos. We illustrate with examples that widely used metrics (discussed in Section~\ref{sec:prelims}) namely Gini coefficient~\cite{GiniCoeff,gochhayatGini} ($G_{c}$), Nakamoto Coefficient~\cite{decentralizationMeasure,centralizationBlog} ($N$) and Entropy-based metrics~\cite{centralizationMetricsGeneral1,entropyDecentralization} ($H$) fail to capture centralization in the proposed game. A strategic player can create instances where metrics fail to detect centralization, whereas \ourmeasure\ detects them. 
Next, we also show that any \pos\  bootstrapped using an IC $\Pi$ and a small \ourmeasure\ satisfies $(\tau,\delta,\epsilon)-$decentralization (Definition~\ref{def:decentralization}).

\begin{figure}[t]
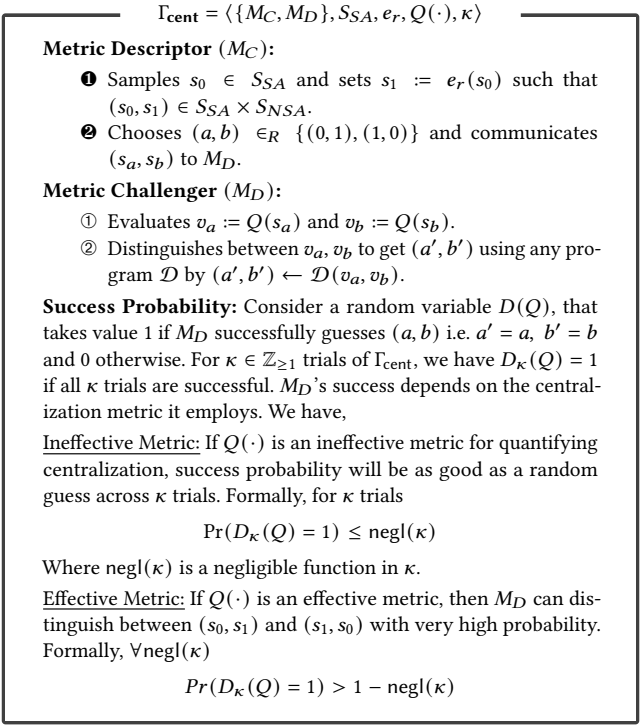

    \centering
    \begin{small}    
    \begin{mybox}{$\Gamma_{\textsf{cent}}=\langle \{M_{C},M_{D}\},S_{SA},e_{r},Q(\cdot),\kappa\rangle$}

    \textbf{Metric Descriptor $(M_{C})$:}
    \begin{itemize}
        \item[\ding{182}] Samples $s_{0} \in S_{SA}$ and sets $s_{1} := e_{r}(s_{0})$ such that $(s_{0},s_{1}) \in S_{SA}\times S_{NSA}$.
        \item[\ding{183}] Chooses $(a,b) \in_{R} \{(0,1),(1,0)\}$ and communicates $(s_{a},s_{b})$ to $M_{D}$.
    \end{itemize}

    \textbf{Metric Challenger $(M_{D})$:}
    \begin{itemize}
        \item[\ding{192}] Evaluates $v_{a} := Q(s_{a})$ and $v_{b} := Q(s_{b})$. 
        \item[\ding{193}] Distinguishes between $v_{a},v_{b}$ to get $(a',b')$ using any program $\mathcal{D}$ by $(a',b')\gets\mathcal{D}(v_{a},v_{b})$.
    \end{itemize}
    \textbf{Success Probability:} Consider a random variable $D(Q)$, that takes value $1$ if $M_D$ successfully guesses $(a,b)$ i.e. $a' = a,\;b' = b$ and $0$ otherwise. For $\kappa\in\mathbb{Z}_{\geq 1}$ trials of $\Gamma_\textsf{cent}$, we have $D_\kappa(Q)=1$ if all $\kappa$ trials are successful. $M_D$'s success depends on the centralization metric it employs. We have,
    
    \smallskip
    \underline{Ineffective Metric:} If $Q(\cdot)$ is an ineffective metric for quantifying centralization, success probability will be as good as a random guess across $\kappa$ trials. Formally, for $\kappa$ trials
    \[\Pr(D_\kappa(Q) = 1) \leq \textsf{negl}(\kappa) \]
    Where $\textsf{negl}(\kappa)$ is a negligible function in $\kappa$.
    
    \smallskip
    \underline{Effective Metric:} If $Q(\cdot)$ is an effective metric, then $M_{D}$ can distinguish between $(s_{0},s_{1})$ and $(s_{1},s_{0})$ with very high probability. Formally, $\forall \textsf{negl}(\kappa)$
    \[Pr(D_\kappa(Q) = 1) > 1 - \textsf{negl}(\kappa)\]
    \end{mybox}
    \caption{The Centralization Game}
    \label{fig:cent-game}
    \end{small}
\end{figure}

\subsubsection{Centralization Game ($\Gamma_{cent}$).} The game has two players -- Metric Descriptor $M_{D}$ and Metric Challenger $M_{C}$. There are two types of \pos. First, $S_{SA}$, which are centralized systems, with players (part of the \pos) launching Sybil attacks to make them appear decentralized. It means $S_{SA}$ is a set of \psg s which are DAGs with at least one directed edge (having non-negligible weight). Second, $S_{NSA}$ are decentralized systems without Sybil attack. Additionally, there exists an onto function $e_{r}:S_{SA}\longrightarrow S_{NSA}$ which removes edges from the \psg\ $\mathcal{G}$ for the given $s_{0} \in S_{SA}$. 

The Game is repeated for $\kappa\in\mathbb{Z}_{\geq 1}$ trials. At any trial, $M_{C}$ gives randomly shuffled samplings $s_{0} \in S_{SA}$ and $s_{1} = e_{r}(s_{0})$. $M_{D}$ uses its centralization metric under analysis $Q(\cdot)$ to find the centralization value for both $s_0$ and $s_1$ and uses these values as inputs to any program $\mathcal{D}$ to report the ordering which was chosen by $M_{C}$. $M_D$ is successful if it correctly guesses the ordering for all $\kappa$ trials. We use $\Gamma_{cent}:=\langle \{M_{C}, M_{D}\}, S_{SA}, S_{NSA}, e_{r}, Q(\cdot), \kappa\rangle$ to refer to this centralization game. Figure~\ref{fig:cent-game} formally defines $\Gamma_{cent}$.


\paragraph{Comparing the Success of Centralization Metrics in $\Gamma_{\textsf{cent}}$.} We now analyze the existing metrics for quantifying centralization (refer Section~\ref{ssec:metrics-prelims}) along with \ourmeasure. Towards this, we first discuss the sampling (say) $(s_{0},s_{1})$ by $M_{C}$ and then calculate the values $Q(s_{0})$ and $Q(s_{1})$ for different metrics in Table~\ref{tab:metric-comparison}. We then argue why any program $\mathcal{D}$ cannot distinguish between $s_{0}$ and $s_{1}$ for other metrics except for \ourmeasure.

\begin{Example}[Analysing centralization metrics using $\Gamma_{cent}$]\label{eg:sample}\upshape Consider the following steps.

\begin{itemize}[itemsep=0.25em,leftmargin=*]
    \item[$\bullet$] Step \ding{182}.  Consider $s_{0}$ such that the player set $P = \{p_{1},p_{2},\ldots,p_{7}\}$ each currently holds stake $C = \{2,5,5,5,5,5,5\}$ and private valuation $\bm{\theta} = \{2,5,5,5,5,5,5\}$. However, there has been a redistribution of stake among players where $p_{1}$ has distributed a stake of $5$ to both $p_{2}$ and $p_{3}$, which means there is a directed edge from $p_{1}$ to both $p_{2}$ and $p_{3}$ with weight $5$ each. The state $s_{1} = e_{r}(s_{0})$ is the same set $P$ and $C$ without any edges between nodes.

\item[$\bullet$] Step \ding{183}. $M_{C}$ now samples randomly $(a,b) \in_{R} \{(0,1),(1,0)\}$ and sends $(s_{a},s_{b})$ to $M_{D}$. The reported valuation of each player is the same (proportional) to their allocated stake.

\item[$\bullet$] Steps \ding{192} \& \ding{193}. Based on the metric being used by $M_{D}$, it calculates $Q(s_{a})$ and $Q(s_{b})$. Now, if both values are different (w.l.o.g., $Q(s_{a}) > Q(s_{b})$) then $s_{a} \in S_{SA}$ and thus $(a,b) = (1,0)$. However, if $Q(s_{a}) = Q(s_{b})$\footnote{E.g., if the metric does not account for transaction history (represented through edges in $\mathcal{G}$) which might represent stake-redistribution among colluding parties} then $M_{D}$ cannot distinguish between $s_{a},s_{b}$ (using metric $Q(\cdot)$) better than a random guess. Without loss of generality, considering $s_{a} = s_{1}$ and $s_{b} = s_{0}$, we tabulate the values $Q(s_{a})$ and $Q(s_{b})$ for different centralization metrics -- Gini Coefficient $G_{c}$ (Eq.~\ref{eqn:gini}), Nakamoto Coefficient $S_{N}$ (Eq.~\ref{eqn:nakamoto-coefficient}), Entropy $H$ (Eq.~\ref{eqn:entropy}) and \ourmeasure\ $\Omega^{\star}$ (Eq.~\ref{eqn:cnorm}) in Table~\ref{tab:metric-comparison}.
\end{itemize}
\end{Example}

\noindent Note that with Example \ref{eg:sample}, we show scenarios where C-NORM captures the centralization while the existing metrics fail to do so. With Theorem~\ref{thm:cnorm-game}, we show that \ourmeasure\ is able to distinguish between $s_{0} \in S_{SA}$ and $s_{1} \in S_{NSA}$ for any $s_{0} \in S_{SA}$ and $s_{1} = e_{r}(s_{0})$ with probability $1 - \textsf{negl}(\kappa)$ for $\kappa$ trials. We show this by constructing a distinguisher $D_{\kappa}(\cdot)$ using centralization measure $\ourmeasure$ which distinguishes between any $s_{0} \in S_{SA}$ and corresponding $s_{1} \in S_{NSA}$. The proof is provided in Appendix~\ref{app:cnorm-game}. Example~\ref{eg:sample} and this result together show that \ourmeasure\ is an effective centralization metric. 

\begin{Theorem}\label{thm:cnorm-game}
    In $\Gamma_{cent}\langle\{M_{C},M_{D}\},S_{SA},e_{r},\Omega^{\star},\kappa\rangle$, for any $(s_{0},s_{1})$ chosen by $M_{C}$ from the set $\{(s_{0},s_{1})\;:\;s_{0} \in S_{SA} , s_{1} = e_{r}(s_{0})\}$, if $M_{D}$ uses metric \ourmeasure\ ($\Omega^{\star}$) then $Pr(D_{\kappa}(\Omega^{\star}) = 1)  > 1 - \textsf{negl}(\kappa)$ for some negligible function $\textsf{negl}(\kappa)$. Here, $S_{SA}$ is the set of all Directed Acyclic Graphs with at least one edge with non-negligible weight. 
\end{Theorem}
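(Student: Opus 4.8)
The plan is to reduce the $\kappa$-trial statement to a single-trial \emph{distinguishing} claim and then amplify. Concretely, I would first show that for every admissible pair $(s_{0},s_{1})$ with $s_{0}\in S_{SA}$ and $s_{1}=e_{r}(s_{0})$ the metric separates the two states, $\Omega(s_{0})>\Omega(s_{1})$. Given this strict gap, the distinguisher $\mathcal{D}$ is immediate: on input $(v_{a},v_{b})=(\Omega(s_{a}),\Omega(s_{b}))$ it declares the state carrying the \emph{larger} \ourmeasure\ value to be the Sybil-attacked one, outputting $(a,b)=(0,1)$ if $v_{a}>v_{b}$ and $(1,0)$ otherwise. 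Whenever the two values differ this recovers the ordering chosen by $M_{C}$ on that trial with probability $1$, so $M_{D}$ wins all $\kappa$ trials and $\Pr(D_{\kappa}(\Omega^{\star})=1)=1>1-\textsf{negl}(\kappa)$. It is instructive to contrast this with the ``ineffective'' branch: a metric that ignores edge weights assigns $v_{a}=v_{b}$, forcing a fair coin guess per trial and hence $\Pr(D_{\kappa}=1)=2^{-\kappa}=\textsf{negl}(\kappa)$, which is exactly the behaviour Example~\ref{eg:sample} exhibits for the Gini, Nakamoto, and entropy metrics.

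The heart of the argument is therefore the single-trial gap, which I would establish directly from the effective-stake definition. By construction $e_{r}$ deletes the directed edges of the \psg\ while leaving every node weight $c_{i}$ fixed. In $s_{1}$ there are no edges, so $\omega_{i}=c_{i}$ for all $i$; evaluated at the reported valuation (which is proportional to the held stake $c_{i}$) the ratios $\omega_{i}/\theta_{i}$ are all equal, every $\beta_{i}=\tfrac{1}{n}$, and hence \ourmeasure\ vanishes, $\Omega(s_{1})=0$. In $s_{0}$, by the defining property of $S_{SA}$ there is at least one directed edge of non-negligible weight, so $\omega_{i}=c_{i}+\sum_{j\in nbr(i)}w_{j,i}-\sum_{j\in nbr(i)}w_{i,j}$ differs from $c_{i}$ for at least one $i$ by a non-negligible amount. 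Consequently $(\omega_{i}/\theta_{i})_{i}$ is no longer constant, at least one scaled stake $\beta_{i}$ deviates from $\tfrac{1}{n}$, and $\Omega(s_{0})>0=\Omega(s_{1})$; because the deleted weight is bounded away from zero, so is this gap. I would also invoke the \texttt{CycleElimination} remark from Section~\ref{ssec:ourmeasure}, which guarantees that each $\omega_{i}$ (and thus \ourmeasure) is independent of the order in which cycles are removed, so the \psg\ --- and the value $\Omega(s_{0})$ --- is well defined for every $s_{0}\in S_{SA}$.

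The step I expect to be the main obstacle is reconciling the \emph{worst-case maximization} $\Omega^{\star}=\max_{\bm{\theta}}\Omega(\bm{\theta})$ with this clean comparison, and ruling out coincidental cancellations. Taking the maximum over all positive valuation profiles can wash out the edge contribution: whenever all effective stakes remain strictly positive, the unconstrained maximum of $\tfrac{1}{2}\sum_{i}\lvert\beta_{i}-\tfrac{1}{n}\rvert$ tends to the same extreme value $\tfrac{n-1}{n}$ for both states, so the separation must be anchored at the reported, stake-proportional valuation rather than read off $\Omega^{\star}$ blindly. The second subtlety is the residual set of \emph{degenerate} configurations in which the edge contributions happen to leave every ratio $\omega_{i}/\theta_{i}$ exactly equal; I would argue this set is measure-zero, i.e.\ is hit with only negligible probability by $M_{C}$'s sampling of non-negligible-weight DAGs. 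These degenerate samples are precisely what the $\textsf{negl}(\kappa)$ slack absorbs: on all but a negligible fraction of trials the strict gap holds and $\mathcal{D}$ wins, yielding $\Pr(D_{\kappa}(\Omega^{\star})=1)>1-\textsf{negl}(\kappa)$, while the handful of exceptions contribute at most a negligible loss.
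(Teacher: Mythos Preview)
Your overall strategy matches the paper's: establish $\Omega(s_{1})=0$ and $\Omega(s_{0})>0$ at the true (stake-proportional) valuation, then let $\mathcal{D}$ output the state with the larger value as the Sybil-attacked one. The paper, like you, anchors the comparison at $\hat{\theta}=\theta$ by invoking the IC assumption rather than at the unconstrained $\max_{\bm{\theta}}$, so your worry that $\Omega^{\star}$ would otherwise tend to $(n-1)/n$ on both sides is legitimate, and the paper resolves it exactly the way you propose.

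Where the paper is sharper is in dispatching your ``cancellation'' concern. Rather than arguing that degenerate edge configurations form a measure-zero set, it exploits the DAG structure directly: any DAG with at least one edge has a \emph{sink} vertex $p_{i}$ (no outgoing edges), so $\omega_{i}=c_{i}+q$ with $q>0$ the total incoming weight at that sink. This yields $\beta_{i}\geq \tfrac{1}{n}+\tfrac{q}{n\theta_{i}}$ and hence the explicit lower bound $\Omega(s_{0})\geq \tfrac{q}{n\theta_{i}}$, non-negligible because $q$ is. The set of degenerate configurations you worried about is therefore \emph{empty}, not merely null, and no probabilistic slack is needed in a single trial; your measure-zero detour can be dropped entirely.
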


\noindent We observe as a corollary of Theorem~\ref{thm:cnorm-game} that \ourmeasure\ ($\Omega^{\star}$) also provides \emph{Sybil-proofness} because it detects any attempts of Sybil-attack and reflects it through a change in the value of $\Omega^{\star}$. 

We next show how \ourmeasure\ fits into the description of an ideal bootstrapping protocol. Towards this, in Theorem~\ref{thm:cnorm-decentralized}, we quantify decentralization (according to Definition~\ref{def:decentralization}) if we measure centralization using \ourmeasure\ $\Omega^{\star}$ of a \pos\ with an IC bootstrapping protocol $\Pi$ provided $\Omega^{\star}$ is small. The proof is in Appendix~\ref{app:cnorm-decentralized}. Note that the theorem provides sufficient (not necessary) conditions for the protocol to be $(\tau,\delta,\epsilon)-$decentralized.

\begin{Theorem}\label{thm:cnorm-decentralized}
    If a \pos\ bootstrapped using an Incentive-Compatible (IC) protocol $\Pi$ and has \ourmeasure\ value $\Omega^{\star}_{1} \leq \alpha$, then the system is $(\tau,\delta,\varepsilon)$-Decentralized for any $\delta \in [0,100]$ and $\varepsilon = \frac{2\alpha}{\bm{\beta}_{\delta}}$.
\end{Theorem}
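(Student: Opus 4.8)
The plan is to verify the three clauses of Definition~\ref{def:decentralization} separately, with the Proportionality bound as the analytic core and the hypothesis that $\Pi$ is IC together with the corollary of Theorem~\ref{thm:cnorm-game} supplying Sybil-proofness.

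First I would establish the Proportionality bound $\frac{\beta_{\max}}{\beta_\delta} \le 1 + \frac{2\alpha}{\beta_\delta}$, which is equivalent to the linear statement $\beta_{\max} - \beta_\delta \le 2\alpha$. The key observation is that, because the scaled stakes are normalized, $\sum_{j\in[n]}\beta_j = 1$, so their mean is exactly $\frac{1}{n}$. Splitting the deviations about this mean into the positive mass $S^+ := \sum_{j:\beta_j>1/n}(\beta_j - 1/n)$ and the negative mass $S^- := \sum_{j:\beta_j<1/n}(1/n - \beta_j)$, the zero-sum identity $\sum_j(\beta_j - 1/n)=0$ forces $S^+ = S^-$, and therefore $\Omega = \frac12\sum_j|\beta_j - 1/n| = S^+ = S^-$. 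Since $\Omega^\star_1 \le \alpha$ bounds the worst-case \ourmeasure, we have $\Omega \le \alpha$, giving $\beta_{\max} - \tfrac1n \le S^+ \le \alpha$ and $\tfrac1n - \beta_\delta \le S^- \le \alpha$ (the latter holding trivially when $\beta_\delta \ge 1/n$, since the left side is then non-positive). Adding these two inequalities cancels the $1/n$ terms and yields $\beta_{\max}-\beta_\delta \le 2\alpha$, i.e.\ exactly $\varepsilon = \frac{2\alpha}{\beta_\delta}$. I would emphasize that the constant $2$ (rather than a looser $4$) is a direct consequence of the $\frac12$ normalization in the definition of $\Omega$, which makes each one-sided mass equal to $\Omega$ itself.

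Next I would dispatch Minimum Participation: $\tau$ is a parameter fixed at the round $t$ at which the metric is evaluated, so taking $\tau = |P_t|/|P|$ satisfies the clause by construction for the current player set. The Proportionality argument above is insensitive to the choice of $\tau$ and $\delta$, which is precisely why the statement is claimed for any $\delta \in [0,100]$.

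The harder, more conceptual part will be Sybil-proofness. Here I would invoke that $\Pi$ is IC together with the corollary of Theorem~\ref{thm:cnorm-game}, namely that $\Omega^\star$ reflects any Sybil attempt. A single player forging identities amounts to splitting its true valuation across pseudo-identities --- exactly the kind of valuation profile over which the worst case $\Omega^\star$ already maximizes --- and such redistribution is recorded in the \psg\ $\mathcal{G}$ through its directed edges; hence it cannot drive the true ratio below the $1 + 2\alpha/\beta_\delta$ guarantee established above. Moreover, IC means truthful reporting with a singleton identity set is an equilibrium, so deviating to a forged identity profile weakly decreases the player's utility. Combining these two facts gives precisely the Sybil-proofness clause: $\beta_{\max}/\beta_\delta$ cannot be reduced by forging identities without decreasing utility. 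The main care needed throughout is aligning the worst-case quantifier in $\Omega^\star$ with the valuation-splitting interpretation of Sybil attacks, and handling the boundary case in which the $\delta$-percentile stake already lies above the mean.
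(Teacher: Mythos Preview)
Your proposal is correct and lands on the same target inequality $\beta_{\max}-\beta_\delta\le 2\alpha$ as the paper, with the Minimum Participation and Sybil-proofness clauses handled in essentially the same spirit (the paper simply asserts that Minimum Participation can be enforced on the protocol and invokes Theorem~\ref{thm:cnorm-game} for Sybil-proofness).

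The only difference worth noting is in the Proportionality step. The paper argues by an explicit case split on whether $\beta_\delta\le \tfrac1n$ or $\beta_\delta\ge\tfrac1n$: in the first case it drops all but the two relevant terms from the sum to get $\tfrac12\bigl(|\beta_{\max}-\tfrac1n|+|\beta_\delta-\tfrac1n|\bigr)\le\alpha$ and telescopes; in the second it uses only $\beta_{\max}-\tfrac1n\le 2\alpha$ together with $\tfrac1n/\beta_\delta\le 1$. Your positive/negative mass decomposition $S^+=S^-=\Omega$ is a cleaner way to organize the same information: it absorbs both cases at once, and as a side benefit gives the sharper one-sided bound $\beta_{\max}-\tfrac1n\le\alpha$ (rather than $2\alpha$) uniformly. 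The final $\varepsilon=2\alpha/\beta_\delta$ is identical either way, so this is a presentational improvement rather than a different argument.
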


\begin{table}[!t]
\begin{small}
\centering
\begin{tabular}{ p{15em} p{5em}  p{5em}}
\toprule
\multirow{2}{*}{\textbf{Centralization Metric}}& \multicolumn{2}{l}{\textbf{PoS Systems}}\\
 & $s_{0}$ & $s_{1}$ \\
\midrule
Nakamoto Coefficient $(N)$~\cite{centralizationMetricsGeneral2} & $3$ & $3$\\
Entropy $(H)$~\cite{entropyDecentralization} & $0.1405$ & $0.1405$ \\
Gini Coefficient $(G)$~\cite{gochhayatGini} & $0.0804$ & $0.0804$ \\
\ourmeasure\ $(\Omega^{\star})$ & $0.6$ & $0$ \\
\bottomrule
\end{tabular}
\smallskip
\caption{Scores of existing centralization metrics compared with C-NORM distinguishing between centralized $(S_{SA})$ and decentralized $(S_{NSA})$ PoS systems (refer Example~\ref{eg:sample}).}
\label{tab:metric-comparison}
\end{small}
\end{table}

Having introduced ideal bootstrapping and a centralization metric, we now analyze some of the widely used bootstrapping protocols for IC, IR, and decentralization properties. 

\section{Analysis of Bootstrapping Protocols}
\label{sec:analysis-existing-protocols-w2sb}

In this section, we (i) prove why \emph{Airdrop} and \emph{Proof-of-Burn} are not Ideal Bootstrapping protocols. (ii) formally show that a PoW-based variant (abbreviated as W2SB) is an Ideal bootstrapping protocol.

\subsection{Airdrop \& Proof-of-Burn}
Airdrop and PoB are protocols widely employed in Blockchain space~\cite{PoBResource1,PoBResource2,AirdropResource1,AirdropResource2} in different capacities including bootstrapping of protocols. For Airdrop and Proof-of-Burn based bootstrapping to be ideal, these protocols must satisfy all properties outlined in Definition~\ref{def:idealbootstrappingprotocol}. Based on this, in Claim~\ref{claim:airdrop-no-ic}, we show that Airdrop is not an ideal bootstrapping protocol as it is not IC.  

\begin{Claim}\label{claim:airdrop-no-ic}
    An Airdrop-based bootstrapping protocol is not IC.
\end{Claim}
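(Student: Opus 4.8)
The plan is to show that the Airdrop mechanism fails Incentive Compatibility (Definition~\ref{def:ic}) by exhibiting a single player who strictly benefits from a deviation away from truthful reporting, given all others report truthfully. Recall that under the airdrop, tokens are distributed to players who meet some predetermined eligibility criterion (e.g., posting a tweet), so the allocated stake is \emph{not} a function of a player's true valuation $\theta_i$ of the system. This disconnect is the crux: the utility in Eq.~\ref{eqn:utility-general} is $U_i = b\cdot\hat\theta_i - \Omega(\cdot)\cdot g(\theta_i)$, and since the allocation in an airdrop is tied only to eligibility rather than to any costly, valuation-revealing action, a player can inflate the stake it receives without incurring a cost proportional to its valuation.

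First, I would formalize the airdrop allocation rule within the $\Gamma_{\textsf{bootstrap}}$ framework: a player meeting the eligibility condition receives a fixed allocation independent of $\theta_i$, so the reported valuation $\hat\theta_i$ that the protocol effectively credits can be driven up by satisfying the criterion cheaply (or, via the Sybil channel in the strategy space $\mathcal{A}_i$, by forging identities that each independently qualify). The key step is then to construct the profitable deviation. Fix all other players reporting truthfully ($\mathcal{H}_{-i}$, $\bm\theta_{-i}$). I would compare the truthful utility $\mathbb{E}[U_i(\theta_i,\bm\theta_{-i},\{i\},\mathcal{H}_{-i};\theta_i)]$ against a deviation $\hat\theta_i > \theta_i$ (equivalently, claiming more eligible identities $A_i$) that raises the first term $b\cdot\hat\theta_i$. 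Because the airdrop grants stake essentially for free to any eligible identity, the marginal gain $b\cdot(\hat\theta_i - \theta_i)$ from over-reporting is strictly positive, while the penalty term $\Omega(\cdot)\cdot g(\theta_i)$ depends on $g(\theta_i)$ — a function of the \emph{true} valuation, which the player does not change by deviating. Hence the player can strictly increase the reward term without a compensating increase in the centralization penalty tied to $\theta_i$, violating Eq.~\ref{eqn:ic}.

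I would make the deviation concrete with a small instance (mirroring the style of Example~\ref{eg:sample}): take a player whose genuine eligibility would yield stake proportional to $\theta_i$, and show that by performing the cheap eligibility action multiple times (a Sybil deviation $A_i$ with $|A_i|>1$) the player's aggregate allocated stake, and therefore $b\cdot\hat\theta_i$, strictly exceeds the truthful value. Since by Theorem~\ref{thm:cnorm-game} the metric $\Omega^\star$ would detect such Sybil redistribution, one must be careful to argue under the airdrop's \emph{own} (weaker or absent) Sybil defenses rather than assuming $\Omega$ already prices it in; the honest reading is that the airdrop allocation itself is the object that is not IC, independent of which metric observes the outcome.

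The main obstacle I anticipate is precisely pinning down what ``reporting $\hat\theta_i$'' means operationally for an airdrop, since the airdrop does not solicit a valuation in the usual mechanism-design sense — the valuation is revealed only implicitly through eligibility. I would resolve this by identifying the eligibility-satisfying action (and its Sybil replication) as the elicitation channel, so that over-reporting corresponds to claiming more than one's fair share of tokens at negligible marginal cost; establishing that this marginal cost is strictly below the marginal reward $b$ is what drives the strict inequality and completes the contradiction with IC.
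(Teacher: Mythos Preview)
Your overall strategy matches the paper's --- exhibit a Sybil deviation $|A_i|>1$ that strictly raises utility when others are truthful --- but you leave the decisive step open. You correctly flag the worry that $\Omega$ might rise and offset the reward gain, then hedge by retreating to ``the airdrop's own (weaker or absent) Sybil defenses.'' The paper does not hedge here, and this is precisely the crux: freshly forged identities that each satisfy the airdrop eligibility criterion have \emph{no on-chain transactions} linking them to the attacker, so the \psg\ $\mathcal{G}$ acquires no edges among the members of $A_i$. Consequently $\Omega^\star$ is unchanged by the deviation, the penalty term $\Omega^\star\cdot g(\theta_i)$ is identical on both sides of Eq.~\ref{eqn:ic}, and the comparison reduces to $b_{\text{airdrop}}\cdot|A_i| > b_{\text{airdrop}}$. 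Until you pin down why $\Omega$ does not move, the inequality is not established and the proof does not close.

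A secondary issue: the paper instantiates the airdrop reward as $b_{\text{airdrop}}\cdot\bm{1}_{\hat\theta_i>0}$ --- a fixed grant per eligible identity --- rather than the generic linear term $b\cdot\hat\theta_i$ you carry through. Under the indicator form, your first proposed deviation (a single identity merely over-reporting $\hat\theta_i>\theta_i$) yields no additional reward at all; only the Sybil channel produces a strict gain. The two deviations you call ``equivalent'' are therefore not, and the argument must proceed through the coalition strategy $A_i$ with $|A_i|>1$ specifically.
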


We defer the proof to Appendix~\ref{app:airdrop-no-ic} which uses the property of airdrop that 
launching a Sybil attack does not affect utility as it does not increase $\Omega$. It is because pseudo-identities have no on-chain transactions establishing a relationship with the player launching the Sybil attack; thus, the collusion is undetected. Furthermore, to show that Proof-of-Burn (PoB) is not an ideal bootstrapping protocol, Claim~\ref{claim:pob-no-ir} proves that it does not satisfy IR. 

\begin{Claim}\label{claim:pob-no-ir}
A Proof-of-Burn-based bootstrapping protocol is not IR.
\end{Claim}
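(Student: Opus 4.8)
The plan is to refute the universal IR condition of Definition~\ref{def:ir} by exhibiting a single Proof-of-Burn instance in which some player strictly prefers to abstain. The feature that distinguishes PoB from a productive scheme is that participation requires \emph{irreversibly destroying} a portion of an existing parent cryptocurrency: the burned tokens are a pure sunk cost that yields no residual value. I would therefore locate this burning cost inside the utility model of Eq.~\ref{eqn:utility-general} and show it can swamp any reward the bootstrapped system provides.

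First I would instantiate Eq.~\ref{eqn:utility-general} for PoB. To obtain stake proportional to a reported valuation $\hat{\theta}_i = \theta_i > 0$, player $p_i$ must burn parent-currency whose value scales with $\theta_i$; this destruction is exactly the cost $b_{\text{spent}}$ folded into $b = (r_{\text b} - b_{\text{spent}})\gamma$. The abstaining strategy $(\hat{\theta}_i, A_i) = (0,\{i\})$ burns nothing and so pays no such cost. Writing $\Omega_{\text{part}}$ and $\Omega_{\text{abs}}$ for the \ourmeasure\ values under participation and abstention, the IR inequality of Eq.~\ref{eqn:ir} reduces to
\[ b\,\theta_i \;>\; \big(\Omega_{\text{part}} - \Omega_{\text{abs}}\big)\, g(\theta_i). \]
Next I would drive the left-hand side non-positive. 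Since the block reward $r_{\text b}$ of a freshly bootstrapped chain is fixed by its issuance schedule, while $b_{\text{spent}}$ tracks the exogenous market value of the burned parent tokens, I would select a sufficiently valuable parent chain so that $b_{\text{spent}} \ge r_{\text b}$, giving $b \le 0$ (this is precisely the dependence on the parent cryptocurrency flagged by $\dagger$ in Table~\ref{tab:comparison-table}). With $b \le 0$ and $\theta_i > 0$, the left-hand side is at most $0$, so IR could survive only if a single player's entry strictly \emph{reduced} $\Omega$. I would close the gap by choosing the scenario so this fails: a symmetric population (or letting $n$ grow so one player's entry does not move the metric) yields $\Omega_{\text{part}} = \Omega_{\text{abs}}$, whence the required condition collapses to $b\,\theta_i > 0$, contradicting $b \le 0$. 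Thus abstaining weakly dominates participating and the strict IR inequality is violated.

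The main obstacle is the second step: rigorously tying the abstract cost $b_{\text{spent}}$ to the concrete burning mechanism and justifying that it can exceed $r_{\text b}$. The delicate point is that burning annihilates parent-chain value that is never recovered, unlike the expenditure in a PoW-based scheme where mining effort simultaneously earns the reward. I would make this precise by pitting the player's outside option --- retaining the un-burned parent tokens --- against the bootstrapped stake, and arguing that for a valuable enough parent chain no issuance schedule compensates, so $b \le 0$ is unavoidable and IR necessarily fails.
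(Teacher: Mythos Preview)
Your overall strategy---show that the net coefficient $b$ in Eq.~\ref{eqn:utility-general} is non-positive, so that truthful participation cannot strictly dominate abstention---is exactly the paper's, and your reduction of the IR inequality to the sign of $b\,\theta_i$ is correct. Two points of divergence are worth noting.

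First, the paper does not treat $b\le 0$ as a parameter choice (``select a sufficiently valuable parent chain''). It instead gives a structural economic argument: if burning $a$ units of the parent token (worth $d$ in fiat) yields $b$ units of the new token (worth $e$ in fiat), then the very act of fixing the burn ratio imposes a \emph{one-way peg} (price ceiling) on the new token, forcing $b\cdot e \le a\cdot d$; see \cite[Section~10.1]{NarayananBitcoin}. This is precisely the ``main obstacle'' you flagged, and the paper dispatches it by citation rather than by choosing an adversarial parent chain. The upshot is that the paper's argument applies to \emph{every} PoB instance, whereas your counterexample route only exhibits one failing instance---sufficient to refute the universal IR condition, but weaker than what the paper actually proves.

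Second, your detour through $\Omega_{\text{part}}$ versus $\Omega_{\text{abs}}$ is unnecessary. In the paper's instantiation both the participating and abstaining utilities carry the same $-\Omega^\star\cdot g(\theta_i)$ term (the centralization of the system is taken as fixed when comparing a single player's two options under identical $\hat{\bm\theta}_{-i},\mathcal{A}_{-i}$), so it cancels immediately and the comparison reduces directly to the sign of $(b\cdot e - a\cdot d)\gamma\theta_i$. You can drop the symmetric-population / large-$n$ maneuver entirely.
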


For the proof, refer to Appendix~\ref{app:pob-no-ir}, we show that a PoB bootstrapped cryptocurrency will always be upward-pegged (also referred to as one-way pegging~\cite{NarayananBitcoin}). Thus, a player's maximum utility from PoB bootstrapping in expectation is less than the utility from abstaining from the protocol.

Motivated by the switch of Ethereum from PoW to PoS~\cite{ethereumMerge}, we next discuss the potential of PoW as a bootstrapping protocol.

\subsection{W2SB: PoW-based Bootstrapping}
Proof-of-Work (PoW) is a promising choice for a bootstrapping protocol as it offers Sybil resistance by design (Section~\ref{ssec:blockchain-prelims}). PoW involves solving a cryptographic puzzle to mine (propose) a block. Let $r_{\text{b}}$ be the reward obtained by the miner\footnote{We refer to the strategic players as miners here, as miners may be a more accessible term for PoW-based bootstrapping.} on successfully solving the puzzle. We consider that each miner $p_{i} \in P$ invests some computational resource (aka. mining power). Consider $m_{i}$ be the mining power if the invested cost is according to their true valuation (i.e. $m_{i}$ is mining power when $\hat{\theta}_{i} = \theta_{i}$). The total mining power when all miners $p_{i} \in P$ report their true valuation is $M = \sum_{i\in[n]} m_{i}$. The cost incurred per unit of mining power in a round is $\chi$ in expectation. For IC, we require $p_{i}$ gets a higher reward with mining power $m_{i}$ (corresponding to $\hat{\theta}_{i} = \theta_{i}$ ) than with $m_{i} + a$ (corresponding to $\hat{\theta}_{i} > \theta_{i}$) or $m_{i} - a$ (corresponding to $\hat{\theta}_{i} < \theta_{i}$). 

\paragraph{W2SB: IC and IR.} We abstract out a Proof-of-Work based bootstrapping protocol which runs for a specified number of rounds (say $T$) and satisfies conditions stated in Lemma~\ref{lemma:pow-ic-ir} as \emph{\textbf{W}ork \textbf{to} \textbf{S}take \textbf{B}ootstrap} (\proname). We show that \proname\ is an Ideal Bootstrapping Protocol. Towards this, we first show with Lemma~\ref{lemma:pow-ic-ir} that \proname\ is both IC and IR (given some conditions on $\chi,r_{\text{b}}$ and $M$). 

\begin{lemma}\label{lemma:pow-ic-ir}
    W2SB satisfies IC and IR if (1) $\frac{\chi\cdot M}{r_{\text{b}}} \leq 1$ and (2) $\frac{\chi\cdot M}{r_{\text{b}}} \geq (1 - \frac{m_{\min}}{M})$ for $m_{\min} = \min_{i \in [n]} m_{i}$.
\end{lemma}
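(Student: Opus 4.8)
The plan is to instantiate the general utility of Eq.~\eqref{eqn:utility-general} with the PoW payoff recalled in Section~\ref{ssec:blockchain-prelims}, and then verify Definitions~\ref{def:ir} and~\ref{def:ic} directly. Fix a miner $p_i$ and hold every other miner at its truthful report, so the aggregate power of the others is $S := M - m_i$. If $p_i$ invests mining power $x$ (corresponding to some report $\hat{\theta}_i$), the total becomes $S+x$, and by the preliminaries its expected per-round payoff from mining is
\[
 f(x) \;=\; \frac{x}{S+x}\bigl(r_{\text{b}}-(S+x)\chi\bigr)\;=\;\frac{x\,r_{\text{b}}}{S+x}-x\chi .
\]
Truthful reporting is the point $x=m_i$ (so $S+x=M$) and abstaining is $x=0$.

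First I would dispatch IR. Evaluating at the truthful point gives $f(m_i)=\frac{m_i}{M}(r_{\text{b}}-M\chi)$, while abstaining gives $f(0)=0$. Condition~(1), $\chi M/r_{\text{b}}\le 1$, is exactly $r_{\text{b}}-M\chi\ge 0$, so the net per-round surplus $r_{\text{b}}-b_{\text{spent}}=r_{\text{b}}-M\chi\ge 0$ and hence $b=(r_{\text{b}}-b_{\text{spent}})\gamma\ge 0$, giving $f(m_i)\ge f(0)$. Since truthful participation leaves the system decentralized (all scaled stakes equal to $1/n$, so $\Omega^{\star}=0$) whereas abstaining still carries the centralization term $\Omega^{\star}g(\theta_i)$, the strict inequality of Eq.~\eqref{eqn:ir} follows.

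For IC I would show $x=m_i$ is a best response to the truthful profile, treating over- and under-reporting separately. A direct computation gives that $f$ is strictly concave with unique maximiser $x^{\star}=\sqrt{r_{\text{b}}S/\chi}-S$. Condition~(2), $\chi M/r_{\text{b}}\ge 1-m_{\min}/M$, rearranges to $r_{\text{b}}(M-m_{\min})\le \chi M^2$; since $m_i\ge m_{\min}$ yields $r_{\text{b}}(M-m_i)\le \chi M^2$, this is precisely $x^{\star}\le m_i$ for every $i$. Hence $f$ is non-increasing on $[m_i,\infty)$, so any over-report $x=m_i+a$ strictly lowers the mining payoff; as over-reporting additionally pushes $p_i$'s scaled stake above $1/n$ and thus raises $\Omega^{\star}g(\theta_i)$, the full utility strictly decreases and over-reporting is never profitable.

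The hard part will be the under-reporting case, and this is where both conditions and the centralization term must be combined. Because condition~(2) places the truthful point past the mining optimum ($x^{\star}\le m_i$), the mining payoff $f$ in isolation would actually \emph{reward} shrinking $x$ toward $x^{\star}$, so the argument cannot rest on $f$ alone. I would instead use the full utility of Eq.~\eqref{eqn:utility-general}: under-reporting drives $p_i$'s scaled stake below the target $1/n$, which strictly increases $\Omega^{\star}$ (by the edge/Sybil analysis underlying Theorem~\ref{thm:cnorm-game}) and hence the penalty $\Omega^{\star}g(\theta_i)$, while the loss of the report-proportional term $b\hat{\theta}_i$ (with $b\ge 0$ from condition~(1)) compounds the effect. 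Establishing that this combined loss dominates the bounded mining gain $f(x^{\star})-f(m_i)$ is the crux: I expect to bound that gain using $x^{\star}\le m_i$ together with condition~(1), and to bound the penalty increase from below by the induced scaled-stake deviation. Showing that the truthful utility weakly exceeds that of every admissible $x=m_i\pm a$ then establishes that the truthful profile is a Nash equilibrium, i.e.\ Eq.~\eqref{eqn:ic}, completing IC and, with the IR step, the lemma.
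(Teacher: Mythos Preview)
Your IR argument and the over-reporting half of IC match the paper almost exactly: IR comes from condition~(1) via $f(m_i)=\tfrac{m_i}{M}(r_{\text b}-M\chi)\ge 0$, and condition~(2) blocks over-reporting by making $f$ non-increasing past $m_i$ (the paper's Case~2).

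The divergence is in under-reporting, and here you have in fact put your finger on a flaw in the paper's own proof. The paper does \emph{not} invoke the centralization penalty for this case; it argues purely with the mining payoff and, after an algebraic simplification, writes the expected-reward difference as $\tfrac{a}{M-a}\bigl((M-a)\chi-r_{\text b}\bigr)$, which it then makes negative using only condition~(1). But the correct simplification is
\[
 a\!\left(\chi-\frac{r_{\text b}(M-m_i)}{M(M-a)}\right),
\]
and for small $a>0$ this has the sign of $\chi M^{2}-r_{\text b}(M-m_i)$, which condition~(2) forces to be \emph{nonnegative}. So the paper's Case~1, as written, does not establish what it claims---exactly the obstruction you identify (``condition~(2) places the truthful point past the mining optimum'').

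Your proposed repair, however, has its own gap. You want to offset the positive mining gain from under-reporting with the increase in $\Omega^{\star}g(\theta_i)$, but the model only assumes $g$ is non-decreasing in $\theta_i$; nothing bounds it away from zero or links its magnitude to the size of the deviation. With $g$ constant (or merely very flat) the penalty increase can be made arbitrarily small relative to the strictly positive mining gain $f(x^{\star})-f(m_i)$, so the inequality cannot be closed from conditions~(1) and~(2) alone. In short: you agree with the paper on IR and over-reporting; on under-reporting you correctly diagnose that a mining-payoff-only argument fails, but the centralization-penalty route you sketch needs a stronger quantitative hypothesis on $g$ than the lemma supplies.
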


Appendix~\ref{app:pow-ic-ir} presents the formal proof. In the proof, for IR, we show that the expected utility of any miner, in a single round, is greater than or equal to the utility of abstaining if $\frac{\chi\cdot M}{r_{\text{b}}} \leq 1$. Next, for IC, we show that eliciting $\hat{\theta}_{i} < \theta_{i}$ and $> \theta_{i}$ both gives a lower utility when $\frac{\chi\cdot M}{r_{\text{b}}} \geq (1 - \frac{m_{\min}}{M})$.

\smallskip
\noindent\emph{Note.} Condition (1) in Lemma~\ref{lemma:pow-ic-ir} is a natural requirement for PoW and states that the mining difficulty must be such that the mining cost does not exceed the expected reward from mining. On the other hand, condition (2) requires that for any set of valuations $\bm{\theta}$, there does not exist a miner who is incentivized to increase its mining power. We remark that these are mild requirements and can be satisfied by changing $\chi$, which depends on the difficulty of the cryptographic puzzle (a tunable parameter in PoW/\proname). 

\paragraph{W2SB: Decentralization.} To show that W2SB is decentralized, we show the existence of a finite $T$ for an arbitrary $z \in \mathbb{R}_{>0}$ such that if \proname\ is run for $T$ rounds, $\Omega^{\star} < z$
We assume the dynamic participation of miners is according to an arbitrary distribution $J \leftarrow \mathcal{Z}(\mu,\sigma)$. 
The CDF ($\Psi$) is defined over non-negative integers $\Psi:\mathbb{Z}_{\geq 0} \rightarrow \mathbb{R}_{\geq0}$. Likewise, the PMF($\psi$) is also defined over the set of non-negative integers $\psi:\mathbb{Z}_{\geq 0}\rightarrow \mathbb{R}_{\geq 0}$ and made discrete using the PDF $f$ as $\psi(J=q) = \int_{q-1}^{q}f(J=x)dx$. 
Under this setting, Theorem~\ref{thm:round-complexity} proves that for every $z \in (0,1]$, there exists a finite $T$ such that if we run \proname\ for $T$ rounds, we get $\Omega^{\star}\leq z$. We prove this by first upper bounding $\Omega^{\star}$ for a given round $T$. Then, we show that $\exists T$ for every $z$ such that $\Omega^{\star} \leq z$. Proof is in Appendix~\ref{app:round-complexity}.

\begin{Theorem}\label{thm:round-complexity}
    In \proname\ with miners arriving dynamically, given an arbitrary $z \in (0,1]$, there always exists a finite $T$ such that after $\geq T$ rounds, we have $\Omega^{\star} \leq z$.
\end{Theorem}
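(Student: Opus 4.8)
The plan is to reduce $\Omega^\star$ to a normalized measure of dispersion of the per-valuation effective stakes $\rho_i := \omega_i/\theta_i$, and then to show this dispersion decays to $0$ as the number of rounds $T$ grows, using (i) the incentive-compatibility of \proname\ (Lemma~\ref{lemma:pow-ic-ir}) to pin down the expected accumulation rate, and (ii) a concentration argument for the random block allocations combined with an analysis of the dynamic arrival process $\mathcal{Z}(\mu,\sigma)$.

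First I would rewrite the metric. Writing $\beta_i = \rho_i/\sum_j\rho_j$ and $\bar\rho = \tfrac1n\sum_j\rho_j$, a short calculation gives
\[
\Omega = \frac12\sum_i\Big|\beta_i-\tfrac1n\Big| = \frac{1}{2n\,\bar\rho}\sum_i|\rho_i-\bar\rho|,
\]
so $\Omega$ is (half) the mean absolute deviation of the $\rho_i$ normalized by their mean; hence it suffices to show the $\rho_i$ cluster around their common mean as $T$ grows. By Lemma~\ref{lemma:pow-ic-ir}, in equilibrium each miner $p_i$ invests mining power $m_i$ proportional to $\theta_i$, so in every round it is active it wins the reward $r_{\text{b}}$ with probability $m_i/M_t$, where $M_t$ is the active power. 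Thus $\omega_i = r_{\text{b}} X_i$ with $X_i$ a sum of independent Bernoulli indicators over the rounds in which $p_i$ participates, and $\mathbb{E}[\omega_i] = r_{\text{b}}\sum_t m_i/M_t$ is exactly proportional to $\theta_i$ once all miners have arrived, making $\mathbb{E}[\rho_i]$ a common value across $i$.

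Second I would bound the two sources of deviation of $\rho_i$ from this common value. The \emph{stochastic} source (which miner wins each block) I would control with a Chernoff/Hoeffding bound on $X_i$, yielding a relative deviation of order $1/\sqrt{\mathbb{E}[X_i]}$ that shrinks as $T$ (hence $\mathbb{E}[X_i]$) increases. The \emph{structural} source (miners arriving at different rounds, so early arrivals have a head start) I would control by noting that, once arrived, every miner accumulates at the same valuation-normalized rate, so a fixed head-start deficit contributes an $O(1/T)$ term to each $\rho_i$ that vanishes as $T\to\infty$; the CDF $\Psi$ lets me bound the probability that some miner has not arrived by round $T$ and fold it into a high-probability (hence in-expectation) bound. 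Combining these gives an explicit upper bound $\Omega^\star \leq B(T)$ with $B(T)\to 0$ and $B$ decreasing, so for any target $z$ I can pick a finite $T$ with $B(T)\leq z$, which is the claim.

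I expect the main obstacle to be handling the worst-case maximization over $\bm\theta$ in the definition of $\Omega^\star$ simultaneously with the dynamic arrivals. A profile assigning a vanishingly small valuation (hence mining power) to some miner makes that miner's expected block count small and its relative fluctuation large, so the concentration bound is not uniform over all $\bm\theta$; I would address this either by restricting to valuations bounded away from $0$ (equivalently a minimum mining power $m_{\min}$, already present in Lemma~\ref{lemma:pow-ic-ir}) or by letting $T$ scale with the valuation spread, since the theorem asserts only \emph{existence} of a finite $T$ and permits it to depend on the system parameters. The second delicate point is ensuring the head-start term and the stochastic term decay at compatible rates, so that quantifying the arrival process through $\Psi$ to make ``all miners have arrived'' hold with probability $\to 1$ forces both effects to vanish together and $B(T)$ to be genuinely $o(1)$.
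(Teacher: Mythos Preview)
Your proposal is correct and arguably more careful than the paper's own argument, but it takes a genuinely different route. The paper does not invoke any concentration inequality: it argues purely via deterministic (worst-case) bounds on the accumulated stake. Concretely, it fixes a cutoff round $T_0$ with $\Psi(T_0)=1-x/n$ for some $x<z$, splits $\Omega^\star$ into the contribution of miners arriving before $T_0$ and those arriving after, bounds the latter trivially by $x$, and for the former uses the crude head-start bound $\omega_i \le T_0\, r_{\text b}$ together with the IR lower bound on per-round accumulation after $T_0$. This gives $\Omega^\star \lesssim \Psi(T_0)\,T_0 r_{\text b}/\big(\chi(T-T_0)\big) + x$, which is pushed below $z$ by taking $T$ large enough.

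The difference, then, is that the paper treats block rewards essentially deterministically (implicitly in expectation), whereas you model the random block allocation explicitly and control it via Chernoff--Hoeffding; your mean-absolute-deviation rewriting of $\Omega$ is also absent from the paper. What the paper's approach buys is simplicity: no concentration step, no dependence on how many blocks a small-valuation miner is expected to win, and hence no non-uniformity over $\bm\theta$ in the tail bounds. What your approach buys is rigour about the randomness, at the price of exactly the obstacle you flag (concentration degrades for tiny $m_i$), which the paper sidesteps by never conditioning on per-miner block counts. One caveat worth noting: neither argument fully confronts the outer maximisation over all $\bm\theta$ with $\theta_i>0$ in the definition of $\Omega^\star$; the paper tacitly evaluates at the equilibrium $\bm\theta$, and your proposed restriction via $m_{\min}$ (already present in Lemma~\ref{lemma:pow-ic-ir}) is the natural way to make this honest.
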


 From Theorem~\ref{thm:cnorm-decentralized} and Theorem~\ref{thm:round-complexity}, we see that that \proname\ satisfies $(\tau,\delta,\epsilon)-$ Decentralization if it is run for $T$ rounds. 
\begin{corollary}\label{cor:proto-decentralized}
    \proname\ on running for finite rounds is $(\tau,\delta,\epsilon)-$ Decentralized given miners arrive dynamically according to some distribution with CDF $\Psi(\cdot)$.
\end{corollary}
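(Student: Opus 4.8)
The plan is to derive Corollary~\ref{cor:proto-decentralized} by assembling the three ingredients already in hand: Lemma~\ref{lemma:pow-ic-ir} (IC and IR of \proname), Theorem~\ref{thm:round-complexity} (arbitrarily small $\Omega^{\star}$ after finitely many rounds), and Theorem~\ref{thm:cnorm-decentralized} (small $\Omega^{\star}$ under an IC protocol implies proportionality). Recall from Definition~\ref{def:decentralization} that $(\tau,\delta,\epsilon)$-decentralization is a conjunction of three requirements --- minimum participation, proportionality, and Sybil-proofness --- so I would establish each separately and then take the round horizon $T$ large enough to meet all of them at once.

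First I would handle minimum participation. Since miners arrive dynamically according to the distribution with CDF $\Psi(\cdot)$, the fraction $|P_{t}|/|P|$ of players present by round $t$ is governed by $\Psi(t)$, which is non-decreasing and tends to $1$. Hence for the prescribed $\tau \in (0,1]$ there is a finite round $T_{1}$ with $\Psi(T_{1}) \geq \tau$, so that $|P_{T_{1}}|/|P| \geq \tau$; after $T_{1}$ rounds the participation clause holds and, by monotonicity of $\Psi$, continues to hold for every later round.

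Next I would handle proportionality and Sybil-proofness. Lemma~\ref{lemma:pow-ic-ir} gives that \proname\ is IC, which is exactly the hypothesis needed to invoke Theorem~\ref{thm:cnorm-decentralized}: if $\Omega^{\star} \leq \alpha$ then the system is $(\tau,\delta,\varepsilon)$-decentralized with $\varepsilon = 2\alpha/\beta_{\delta}$. To realize a target $\epsilon$ I would therefore set the threshold $z := \epsilon\,\beta_{\delta}/2$ and appeal to Theorem~\ref{thm:round-complexity}, which furnishes a finite $T_{2}$ such that running \proname\ for at least $T_{2}$ rounds yields $\Omega^{\star} \leq z$; substituting back gives $\beta_{\max}/\beta_{\delta} \leq 1+\epsilon$. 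For Sybil-proofness I would invoke the corollary of Theorem~\ref{thm:cnorm-game}: any forging of identities leaves a detectable on-chain footprint (an edge of non-negligible weight in the \psg), so it changes $\Omega^{\star}$ and hence strictly lowers the deviating player's utility through the $\Omega(\cdot)\,g(\theta_{i})$ term of Eq.~\ref{eqn:utility-general}; thus $\beta_{\max}/\beta_{\delta}$ cannot be reduced by a single player forging identities without losing utility, which is precisely the Sybil-proofness clause. Finally I would take $T := \max(T_{1},T_{2})$, so that after $\geq T$ rounds all three clauses hold simultaneously.

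The main obstacle I anticipate is the mild circularity in the proportionality step: the threshold $z = \epsilon\,\beta_{\delta}/2$ depends on $\beta_{\delta}$, which is itself a function of the evolving stake distribution and of how many players have already arrived, and is therefore not a constant known in advance. I would resolve this by noting that once participation is near-complete the scaled stakes cluster around $1/n$, so $\beta_{\delta}$ is bounded away from $0$; fixing a uniform lower bound $\underline{\beta}_{\delta} > 0$ lets me commit to a worst-case threshold $z = \epsilon\,\underline{\beta}_{\delta}/2 \in (0,1]$ before appealing to Theorem~\ref{thm:round-complexity}, which breaks the dependence. A secondary point to verify is that the two horizons are mutually compatible --- that driving $\Omega^{\star}$ below $z$ does not conflict with the participation growth --- but since $T_{1}$ and $T_{2}$ are both finite and the relevant conditions are monotone in $t$ (participation only increases, and $\Omega^{\star}$ remains small once it is small, by Theorem~\ref{thm:round-complexity}), taking $T = \max(T_{1},T_{2})$ suffices.
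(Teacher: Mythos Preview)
Your proposal is correct and follows essentially the same route as the paper, which derives the corollary in one line from Theorem~\ref{thm:cnorm-decentralized} and Theorem~\ref{thm:round-complexity} (with Sybil-proofness and minimum participation already folded into the proof of Theorem~\ref{thm:cnorm-decentralized}). Your version is in fact more careful: you make the dependence on Lemma~\ref{lemma:pow-ic-ir} for IC explicit, and you flag and resolve the circularity in choosing $z=\epsilon\beta_{\delta}/2$ --- a point the paper simply leaves implicit.
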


\noindent Having shown that \proname\ satisfies IR and IC (under some conditions) and is $(\tau,\delta,\epsilon)-$decentralized if run for $T$ rounds, we conclude in Theorem~\ref{thm:w2sb-ideal} that \proname\ is an ideal bootstrapping protocol.

\begin{Theorem}\label{thm:w2sb-ideal}
\proname\ is an \emph{ideal bootstrapping protocol} for $(1 - \frac{m_{\min}}{M}) \leq \frac{\chi M}{r_{\text{b}}} \leq 1$, where $m_{\min} = \min_{i \in [n]} m_{i}$.
\end{Theorem}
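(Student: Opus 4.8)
The plan is to assemble the three defining properties of an ideal bootstrapping protocol from Definition~\ref{def:idealbootstrappingprotocol} — IR, IC, and $(\tau,\delta,\epsilon)$-decentralization — directly from the results already proved for \proname, noting that the single hypothesis $(1 - \frac{m_{\min}}{M}) \leq \frac{\chi M}{r_{\text{b}}} \leq 1$ is exactly the conjunction of the conditions needed by those earlier results. In that sense the statement is a synthesis rather than a fresh argument, so the work lies in checking that the hypotheses of each invoked result are met and that they are invoked in the correct logical order.

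First I would observe that the stated parameter window coincides with conditions~(1) and~(2) of Lemma~\ref{lemma:pow-ic-ir}: the upper bound $\frac{\chi M}{r_{\text{b}}} \leq 1$ is condition~(1), and the lower bound $\frac{\chi M}{r_{\text{b}}} \geq 1 - \frac{m_{\min}}{M}$ is condition~(2). I would first remark that this window is non-empty — since $m_{\min} \geq 0$ and $M > 0$ give $1 - \frac{m_{\min}}{M} \leq 1$, and $\chi$, which is governed by the puzzle difficulty, is a tunable parameter that can be placed inside the interval. Applying Lemma~\ref{lemma:pow-ic-ir} then yields immediately that \proname\ is both IR and IC, discharging the first two bullets of Definition~\ref{def:idealbootstrappingprotocol}.

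Next I would establish the decentralization bullet. Because IC now holds, I can invoke Theorem~\ref{thm:round-complexity}: for any target $z \in (0,1]$ there is a finite $T$ such that running \proname\ for at least $T$ rounds forces $\Omega^{\star} \leq z$. Fixing such a $z$ (equivalently, setting $\alpha := z$) and running for $T$ rounds gives $\Omega^{\star}_1 \leq \alpha$. Since \proname\ is IC and its \ourmeasure\ value is now at most $\alpha$, Theorem~\ref{thm:cnorm-decentralized} applies and certifies that the resulting PoS system is $(\tau,\delta,\epsilon)$-decentralized with $\epsilon = \frac{2\alpha}{\beta_\delta}$ for every $\delta \in [0,100]$ — precisely the content of Corollary~\ref{cor:proto-decentralized}. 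The Sybil-proofness clause of Definition~\ref{def:decentralization} is subsumed here, as $\Omega^{\star}$ detects identity forging (the corollary of Theorem~\ref{thm:cnorm-game}) and PoW is Sybil-resistant by construction.

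With IR, IC, and $(\tau,\delta,\epsilon)$-decentralization all holding under the single hypothesis on $\frac{\chi M}{r_{\text{b}}}$, the three conditions of Definition~\ref{def:idealbootstrappingprotocol} are met simultaneously, so \proname\ is an ideal bootstrapping protocol. The step I would watch most carefully — the only genuine subtlety — is the logical ordering of the dependencies: Theorem~\ref{thm:cnorm-decentralized} takes IC as a hypothesis, so IC must be secured via Lemma~\ref{lemma:pow-ic-ir} \emph{before} the decentralization machinery is applied, and I must confirm that the equilibrium in which $\Omega^{\star}$ is driven small is the same truthful-reporting equilibrium guaranteed by IC. Otherwise the convergence of the scaled stakes underlying Theorem~\ref{thm:round-complexity} would not be justified, and the assembly would break down.
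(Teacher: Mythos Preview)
Your proposal is correct and follows essentially the same assembly as the paper's own proof: invoke Lemma~\ref{lemma:pow-ic-ir} under the stated parameter window to obtain IR and IC, then invoke Corollary~\ref{cor:proto-decentralized} (itself derived from Theorems~\ref{thm:round-complexity} and~\ref{thm:cnorm-decentralized}) for $(\tau,\delta,\epsilon)$-decentralization, and conclude via Definition~\ref{def:idealbootstrappingprotocol}. Your added remarks on non-emptiness of the window and the logical ordering (securing IC before applying Theorem~\ref{thm:cnorm-decentralized}) are sound elaborations that the paper leaves implicit.
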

\begin{proof}
Lemma~\ref{lemma:pow-ic-ir} implies IC and IR, and Corollary~\ref{cor:proto-decentralized} implies $(\tau, \delta, \epsilon)-$Decentralization. Hence, \proname\ is Ideal Bootstrapping Protocol by Definition~\ref{def:idealbootstrappingprotocol} 
\end{proof}

\section{W2SB: Experiments \& Discussion}
\label{sec:experimental-analysis}

We now study the empirical change in \ourmeasure\ with a change in the number of rounds, $T$. From Theorem~\ref{thm:round-complexity}, we know that \proname\ reaches an arbitrary level of decentralization (reflected by a decrease in \ourmeasure) for a sufficient $T$. Our experiments aim to quantify \ourmeasure\ in a PoS system when W2SB is run for different values of $T$. We begin by explaining our experimental setup, followed by the results and their discussion. 
%
\begin{figure}[!t]
    \centering
  \begin{subfigure}{0.7\columnwidth}
    \includegraphics[width=0.7\columnwidth]{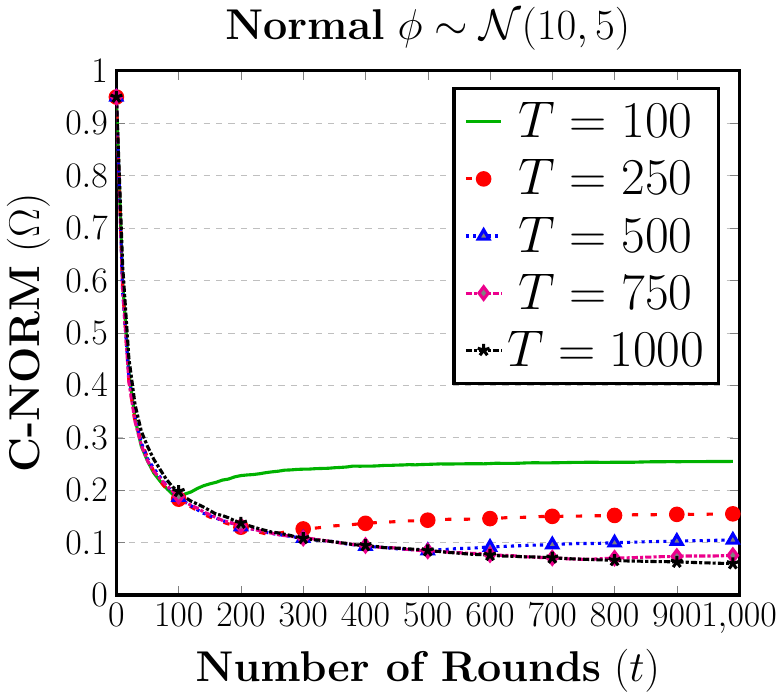}
  \end{subfigure}%
  \begin{subfigure}{0.2\columnwidth}
        \centering
        \begin{tabular}{ p{2em} | p{2em}}
        \toprule
        $z$ & $T$\\
        \midrule
        $0.4$ & $21$ \\
        $0.3$ & $35$ \\
        $0.2$ & $82$ \\ 
        $0.1$ & $350$\\
        \bottomrule
        \end{tabular}
      \end{subfigure}%

    \centering
    \caption{\ourmeasure\ against rounds for different stopping time $T$ for distribution $\mathcal{N}(7,3)$}
    \label{fig:results}     
\end{figure}

\paragraph{Experimental Setup.}
We simulate a PoS-based blockchain bootstrapped using \proname. To empirically measure \ourmeasure, we run \proname\ for different stopping times $T \in \{100,200,500,750,1000\}$, after which the PoS protocol is run for remaining $(1000-T)$ rounds.

To capture each participant $i$'s dynamic arrival, we define the random variable $\mathcal{J}_{i,t}$, which takes the value $1$ if $i$ joins the system at round $t$ and 0 otherwise. We model $\mathcal{J}_{i,t} \sim Chi(k=3)$ from a chi-squared distribution with degree of freedom $k = 3$. We believe this distribution aptly exhibits participant arrival: sudden increase at first, which peaks closer to the starting, followed by a gradual decrease and long tail. We simulate three different instances where each player $p_{i}$ samples its mining power $\phi_{i}$ (proportional to $\theta_{i}$) from (1) Gaussian $\phi\sim\mathbf{\mathcal{N}}(7,3)$, (2) Uniform $\phi\sim\mathbf{\mathcal{U}}(0,50)$, and (3) Exponential $\phi\sim\mathbf{Exp}(\lambda=1)$ distributions.\footnote{We remark that the general trend presented remains similar for different distribution parameters (refer Appendix~\ref{app:plots}).} 

\paragraph{Results \& Inference.}
Figure~\ref{fig:results} depicts our results. We make the following key observations.

\begin{enumerate}[itemsep=0pt]
    \item \textbf{In W2SB $\Omega^{\star} \to 0$ as $t\to \infty$. } From Figure~\ref{fig:results}, we observe that if the W2SB runs for a sufficiently large $T$, then \ourmeasure\ tends to zero, i.e., $\Omega^{\star} \to 0$. 

    \item \textbf{\ourmeasure\ does not decrease after Bootstrapping. } We observe that for rounds $> T$, \ourmeasure\ saturates to a fixed value. This is because the PoS protocol does not cause any change in \ourmeasure\ of the protocol, as desired. 

    \item \textbf{\proname\ achieves $\Omega^{\star} \leq z$ for any $z > 0$. } Any $z$ can be achieved in a finite number of rounds in W2SB, as $\Omega^{\star}$ tends to $0$ asymptotically. The smaller value of $z$, the larger number of rounds required. E.g., the green curve in Figure~\ref{fig:results} with small $T$ is relatively centralized. 
\end{enumerate}

\if 0
\begin{table}[!t]
\begin{small}
\centering
\begin{tabular}{ p{3em} | p{7em} p{7em} p{7em}  }
\toprule
\multirow{2}{*}{$z$}& \multicolumn{3}{c}{\textbf{Stopping Round}$(T)$}\\
 & $\phi\sim \mathcal{N}(7,3)$ & $\phi\sim \mathcal{U}[0,50]$ & $\phi\sim \textbf{Exp}(\lambda=1)$ \\
\midrule
$0.4$ & $21$ & $25$ & $44$ \\
$0.3$ & $35$ & $44$ & $88$ \\
$0.2$ & $82$ & $95$ & $223$ \\ 
$0.1$ & $350$ & $393$ & $\approx 1000$ \\
\bottomrule
\end{tabular}
\smallskip
\caption{Number of rounds required, $T$, for varying $z$.}
\label{tab:trends}
\end{small}
\end{table}
\fi
%
\paragraph{Discussion.}\label{ssec:discussion-future}
Our results (both theoretical and empirical) show that \proname\ is an Ideal bootstrapping protocol that achieves sufficiently small \ourmeasure\ values. The number of rounds for which \proname\ is run ($T$) is a hyper-parameter that is decided by protocol descriptor, allowing a tradeoff between higher levels of decentralization (by increasing $T$) and lower energy consumption (by decreasing $T$). 

\noindent \textbf{Ethereum Merge -- an instance of \proname.} Ethereum also used PoW initially before switching to PoS (aka the Merge~\cite{ethereumMerge}). Although the intention of the Merge might not have been decentralized stake distribution, the consequence was that Ethereum became a decentralized PoS-based blockchain. 

\section{Conclusion \& Future Work}

\paragraph{Conclusion.} This work attempted to resolve the problem of centralization during bootstrapping of PoS-based blockchains. Towards this, we first presented a game-theoretic model of bootstrapping, $\Gamma_\textsf{bootstrap}$. With this, we defined an ideal bootstrapping protocol, i.e., a protocol that simultaneously satisfies IR, IC, and decentralization. To quantify centralization, we introduced \ourmeasure\, which measures centralization in PoS-based blockchains. We show the effectiveness of \ourmeasure\ against existing metrics, using the centralization game, $\Gamma_{cent}$. We then analyzed existing protocols under our model to show that while Airdrop and Proof-of-Burn are not IC and IR, respectively, PoW-based bootstrapping (\proname) is an  Ideal bootstrapping protocol. Our work lays the theoretical foundations for further analysis of PoS-based blockchains.

\paragraph{Future Work.} Decentralized bootstrapping in PoS-based blockchains is relatively new. One potential future direction is to explore verifiable random functions \cite{vrf} for greater energy efficiency than \proname. Another challenge with PoS-based blockchains is that centralization may occur due to stake pools post the bootstrapping phase. Tackling post-bootstrapping threats to decentralization may be an interesting future direction to explore.



\bibliographystyle{ACM-Reference-Format} \bibliography{references}

\appendix 



\section{Proof for Theorem~\ref{thm:cnorm-game}}
\label{app:cnorm-game}
\begin{proof}
    To show that \ourmeasure\ can distinguish between $s_{0}$ and $s_{1}$ we consider any (arbitrary) $s_{0} \in S_{SA}$ and corresponding $s_{1} \in S_{NSA}$. We show for each such sampling for $s_{0},s_{1}$ we have $\Omega^{\star}(s_{0}) \neq \Omega^{\star}(s_{1})$ which concludes our proof. 

    We are considering the experiment being conducted when the stake distribution is happening through an Incentive-Compatible bootstrapping protocol. Therefore, $\hat{\theta}_{i} = \theta_{i}$ for any player $p_{i}$.
    Consider two samples $s_{0} \in S_{SA}$ and $s_{1} = e_{r}(s_{0})$. The goal is for $\Omega^{\star}$ to distinguish between $(s_{0},s_{1})$. 
    
    First we calculate $\Omega^{\star}$ for $s_{1}$. Since there are no edges, $\omega_{i} = c_{i}$. According to the bootstrapping protocol, the allocated stake $c_{i}$ is proportional to the valuation $\theta$ for a player $p_{i}$. Therefore, we get $\omega_{i}/\hat{\theta}_{i} = \theta_{i}\cdot z/\hat{\theta}_{i}$ for some constant $z > 0$. However, due to the Incentive Compatible property of the bootstrapping protocol, we have $\omega_{i}/\hat{\theta}_{i} = z$. Thus, scaled stake $\beta_{i} = \frac{z}{\sum_{i=1}^{n}z} = \frac{1}{n}$ which gives $\Omega^{\star} = 0$.

    Now, to distinguish from the other case, we want $\Omega^{\star} > 0$ (by more than negligible value) for $s_{0}$. Consider $s_{0} \in S_{SA}$. There is atleast one (non-negligible weighted) edge and the graph is a Directed Acyclic Graph, which means there is at least one sink node (node with only incoming edges). Wlog. let this node correspond to player $p_{i}$. The sum of weights of incoming edges be $q$. Therefore, $\omega_{i}/\hat{\theta}_{i} = (c_{i} + q)/\hat{\theta}_{i} = (z\cdot\theta_{i})/\hat{\theta_{i}} + q/\hat{\theta}_{i} = z + q/\theta_{i}$. The scaled stake $\beta_{i} = \frac{z + q/\theta_{i}}{\sum_{j=1}^{n} \omega_{j}/\hat{\theta}_{j}} \geq \frac{z + q/\theta_{i}}{nz} = \frac{1}{n} + \frac{q}{n\cdot\theta_{i}}$. Now we get $\Omega^{\star}$ as 
    \[
    \begin{aligned}
        \Omega^{\star} & = & \frac{1}{2}\sum_{j=1}^{n} \Big|\beta_{j} - \frac{1}{n} \Big| & \geq & \beta_{i} - \frac{1}{n} & = & \frac{q}{n\cdot\theta_{i}}
    \end{aligned}
    \]
    Therefore, $\Omega^{\star} \geq \frac{q}{n\cdot\theta_{i}}$ which is a non-negligible value. Therefore, for any pair $(s_{0},s_{1})$ we can distinguish between the two states. Let the player $M_{D}$ be given $(s_{a},s_{b})$ such that (wlog.) $\Omega^{\star} = 0$ for $s_{a}$ and $\Omega^{\star} > 0$ for $s_{b}$. Then $a = 1, b=0$. Similarly, $M_{D}$ can also predict correctly in case of $a = 0,b = 1$. If we are repeating this $\kappa$ times, then $M_{D}$ correctly predicts the ordering of $(a,b)$ with probability $1 - \textsf{negl}(\kappa)$.
\end{proof}

\section{Proof for Theorem~\ref{thm:cnorm-decentralized}}
\label{app:cnorm-decentralized}

\begin{proof} Consider $\Omega^{\star}_{1} = \alpha$. The set of players $P = \{p_{1},p_{2}\ldots,p_{n}\}$ and \emph{scaled stake} is $\bm{\beta} = \{\beta_{1},\beta_{2},\ldots,\beta_{n}\}$. Wlog. we consider $\beta_{1} \geq \beta_{2} \geq \ldots \geq \beta_{n}$. Therefore, $\beta_{max} = \beta_{1}$. Now, we can write $\Omega^{\star}_{1}$ as
\[
\begin{aligned}
\frac{1}{2}\sum_{i=1}^{n} \Big|\beta_{i} - \frac{1}{n}\Big| = & \;\alpha \\
\end{aligned}
\]
We consider $\delta^{th}$ percentile of $\bm{\beta}$ as the player $p_{i}$ which has the next lowest stake to $\delta$ percentile of players. Let $\bm{\beta}_{\delta}$ be used to represent this $\delta^{th}$ percentile. Therefore, $\bm{\beta}_{\delta} := \beta_{\lceil(1- \delta)\cdot n\rceil}$. 

\smallskip \noindent \underline{Case 1: $\bm{\beta}_{\delta} \leq \frac{1}{n}$:} In this case, we can write $\Omega^{\star}_{1} = \alpha$ as,
\[
\begin{aligned}
    \frac{1}{2}\left(\Big|\beta_{max} - \frac{1}{n}\Big| + \Big|\bm{\beta}_{\delta} - \frac{1}{n}\Big|\right) & \leq & \alpha \\
    \beta_{max} - \frac{1}{n} + \frac{1}{n} - \bm{\beta}_{\delta} & \leq & 2\alpha \\ 
    \frac{\beta_{max}}{\bm{\beta}_{\delta}} & \leq & 1 + \frac{2\alpha}{\bm{\beta}_{\delta}}
\end{aligned}
\]
\smallskip\noindent\underline{Case 2: $\bm{\beta}_{\delta} \geq \frac{1}{n}$:}
In that case, we can write $\Omega^{\star}_{1} = \alpha$ as 
\[
\begin{aligned}
    \beta_{max} - \frac{1}{n} & \leq & 2\alpha \\ 
    \frac{\beta_{max}}{\bm{\beta}_{\delta}} & \leq & \frac{1/n}{\bm{\beta}_{\delta}} + \frac{2\alpha}{\bm{\beta}_{\delta}} & \leq & 1 + \frac{2\alpha}{\bm{\beta}_{\delta}}
\end{aligned}
\]
The last inequality comes since $\bm{\beta}_{\delta} \geq \frac{1}{n}$. We have therefore shown that for $\Omega^{\star}_{1} = \alpha$, the system satisfies the Proportionality condition. Additionally, $\Omega^{\star}$ (\ourmeasure) captures attempts of sybil attacks successfully (as demonstrated from Theorem~\ref{thm:cnorm-game}), and we can enforce on the protocol the condition for \emph{Minimum Participation}. Therefore, if $\Omega^{\star}_{1} = \alpha$ then protocol (which ensures \emph{Minimum Participation}) is $(\tau,\delta,\frac{2\alpha}{\bm{\beta}_{\delta}})-$Decentralized for any $\delta \in [0,1]$. 
\end{proof}

\section{Proof for Claim~\ref{claim:airdrop-no-ic}}
\label{app:airdrop-no-ic}
\begin{proof} In proving non-IC property for Airdrop, we show that if a party forms pseudo-identities, they can always obtain a higher utility than honestly disclosing their valuations, even when other players are reporting their true valuations and are not a part of any coalition. This is because, in Airdrop, all players get the same reward irrespective of their valuation, which incentivizes them to split their valuation among pseudo-identities to obtain a higher utility.
Consider an Indicator Function $\bm{1}_{\hat{\theta}_{i}>0}$ which is $1$ if $\hat{\theta}_{i} > 0$ else $0$. The utility function for \emph{airdrop} bootstrapping protocol for player $p_{i}$ is written as
\[
\begin{aligned}
    U_{i}(\hat{\theta}_{i},\bm{\hat{\theta}}_{-i},\emptyset,\emptyset;\theta_{i}) & = & b_{airdrop}\cdot\bm{1}_{\hat{\theta}_{i}>0} - \Omega^{\star}\cdot g(\theta_{i})
\end{aligned}
\]
For Nash Incentive Compatibility (IC) from Equation~\ref{eqn:ic} of Definition~\ref{def:ic} we require $\forall\;\hat{\theta}_{i} \in \mathbb{R}_{\geq 0},\forall\;A_{i} \in \mathcal{A}_{i}$
\[
\begin{aligned}
    U_{i}(\theta_{i},\bm{\theta}_{-i},\emptyset,\emptyset;\theta_{i}) & \geq & U_{i}(\hat{\theta}_{i},\bm{\theta}_{-i},A_{i},\emptyset;\bm{\theta}) \\ 
    b_{airdrop}\cdot\bm{1}_{\hat{\theta}_{i}>0} - \Omega^{\star}\cdot g(\theta_{i}) & \geq & b_{airdrop}\cdot\sum_{j \in A_{i}}\bm{1}_{\hat{\theta}_{j}>0} - \Omega^{\star}\cdot g(\theta_{i})
\end{aligned}
\]
We therefore observe that by forging identities (Sybil-attack) which is equivalent to forming a coalition $A_{i}$ we observe that a player can forge arbitrary number of identities and gain more reward than following the protocol honestly. This means for any $A_{i}$ such that $|A_{i}| > 1$ the above inequality does not hold true. Thus, Airdrop does not satisfy Nash Incentive Compatibility (IC). 
\end{proof}

\section{Proof for Claim~\ref{claim:pob-no-ir}}
\label{app:pob-no-ir}
\begin{proof}
     Consider a Proof-of-Burn-based bootstrapping where cryptocurrency from an \emph{Old Crypto Token} $\$OCT$ is burnt to obtain \emph{New Crypto Token} $\$NCT$. The conversion rate from $OCT$ to $USD$ is $1\;OCT = d\;USD$ and for $NCT$ is $1\;NCT = e\;USD$. The rule is set such that if a player burns $a\;OCT$ they obtain $b\;NCT$. Payoff on not participating in the protocol for a player $p_{i}$ is $U_{i}(0,\bm{\theta}_{-i},\emptyset,\emptyset;\bm{\theta}) = -\Omega^{\star}\cdot g(\theta_{i})$. The payoff obtained from participating in the protocol honestly, for some constant $\gamma \in \mathbb{R}_{> 0}$ is 
     \[
    \begin{aligned}
        U_{i}(\theta_{i},\bm{\theta}_{-i},\emptyset,\emptyset;\theta_{i}) & = & (b\cdot e - a\cdot d)\gamma\theta_{i} - \Omega^{\star}\cdot g(\theta_{i})
    \end{aligned}
     \]
     However, setting an exchange rate for the New Crypto Token (which is set by the protocol designers) does not establish the exchange rate, but sets a ``one-way peg'' or price-ceiling for $NCT$ such that $b\cdot e \leq a\cdot d$ (See~\cite{NarayananBitcoin} Section~10.1 for more details). If we account for transaction-fees and expected utility, then we get 
     \[
     \begin{aligned}
         \mathbb{E}[U_{i}(\theta_{i},\bm{\theta}_{-i},\emptyset,\emptyset;\theta_{i})] & = \mathbb{E}[(b\cdot e - a\cdot d)\gamma\theta_{i} - \Omega^{\star}\cdot g(\theta_{i})]\\
         & < -\mathbb{E}[\Omega^{\star} g(\theta_{i})] = \mathbb{E}[U_{i}(0,\bm{\theta}_{-i},\emptyset,\emptyset;\bm{\theta})]
     \end{aligned}
     \]
     Therefore, PoB-based bootstrapping protocol is not Individually Rational (IR).
\end{proof} 

\section{Proof for Lemma~\ref{lemma:pow-ic-ir}}
\label{app:pow-ic-ir}
\begin{proof}
    
Consider a PoW blockchian protocol such that if all players invest in mining proportionally to their true valuation then player $p_{i}$ with valuation $\theta_{i}$ has mining power $m_{i} = \theta_{i}\cdot l$ (for some $l \in \mathbb{R}_{>0}$). The cost incurred per unit mining power for any player is $\chi$ and the reward obtained on mining a block be $r_{\text{b}}$.The minimum mining power which a player can have be $m_{min}$. This means $\forall p_{i} \in P, m_{i} \geq m_{min}$. Additionally, let $M = \sum_{i \in [n]} m_{i}$.

Consider for player $p_{i}$ a random variable $R_{i}$ which denotes the payoff of the player. 
\[
    R_{i}= 
\begin{cases}
    r_{\text{b}} - \chi m_{i},& \text{with probability } \frac{m_{i}}{M}\\
    -\chi m_{i},              & \text{with probability} \frac{M - m_{i}}{M}
\end{cases}
\]

\noindent\smallskip \underline{Individual Rationality:} For IR, we require that expected block reward should exceed cost of mining for all players $p_{i}$. Therefore $\forall p_{i} \in P$ the expected utility on abstaining from the protocol is 
\[
    \begin{aligned}
        \mathbb{E}[U_{i}(0,\bm{\hat{\theta}}_{-i},\emptyset,A_{-i};\theta_{i})] & = - \mathbb{E}[\Omega^{\star}\cdot g(\theta_{i})]
    \end{aligned}
\]
The expected utility on participating in the protocol for each round is
\[
    \begin{aligned}
        \mathbb{E}[U_{i}(\theta_{i},\bm{\theta}_{-i},\emptyset,\emptyset;\theta_{i})] & = \mathbb{E}[R_{i}] - \mathbb{E}[\Omega^{\star}\cdot g(\theta_{i})]\\
        & = (r_{\text{b}}\frac{m_{i}}{M} - \chi m_{i}) - \mathbb{E}[\Omega^{\star}\cdot g(\theta_{i})]
    \end{aligned}
\]
For IR we require $\mathbb{E}[U_{i}(0,\bm{\hat{\theta}}_{-i},\emptyset,A_{-i};\theta_{i})] \leq \mathbb{E}[U_{i}(\theta_{i},\bm{\theta}_{-i},\emptyset,\emptyset;\theta_{i})]$ which gives us $\forall p_{i} \in P$
\[
    \begin{aligned}
        (r_{\text{b}}\frac{m_{i}}{M} - \chi m_{i}) & \geq 0 \\
        r_{\text{b}}\frac{m_{i}}{M} & \geq \chi m_{i} & \Rightarrow \frac{\chi M}{r_{\text{b}}} \leq 1\\
    \end{aligned}
\]

\noindent\smallskip \underline{Incentive Compatibility:} For Incentive Compatibility, we will show for each player $p_{i} \in P$, reporting a valuation $\hat{\theta}_{i} \neq \theta_{i}$ will lead to a lower payoff. We proceed in two cases:

\noindent \underline{Case 1 $\hat{\theta}_{i} < \theta_{i}$:} In this case, let $m_{i}^{'}$ be the mining power corresponding to the disclosed valuation $\hat{\theta}_{i}$. Clearly, $m_{i}^{'} < m_{i}$. Consider for some $a > 0, m_{i}^{'} = m_{i} - a$. The expected difference in utility, that is $\mathbb{E}[U_{i}(\hat{\theta}_{i},\bm{\theta}_{-i},A_{i},\emptyset;\theta_{i}) - U_{i}(\theta_{i},\bm{\theta}_{-i},\emptyset,\emptyset;\theta_{i})]$ is therefore given by 
\[
    \begin{aligned}
        \mathbb{E}[U_{i}(\hat{\theta}_{i},\bm{\theta}_{-i},A_{i},\emptyset;\theta_{i}) - U_{i}(\theta_{i},\bm{\theta}_{-i},\emptyset,\emptyset;\theta_{i})] \\ = r_{\text{b}}\left(\frac{m_{i}^{'}}{M - m_{i} + m_{i}^{'}} - \frac{m_{i}}{M}\right) - \chi(m_{i}^{'} - m_{i}) \\
        = \frac{a}{M-a}\left((M - a)\chi - r_{\text{b}}\right) \\
        < \frac{a}{M - a}\left(M\chi - r_{\text{b}}\right) \leq 0
    \end{aligned}
\]
The last inequality comes from the condition obtained from Individual Rationality $\chi M \leq r_{\text{b}}$.

\noindent\underline{Case 2 $\hat{\theta}_{i} > \theta$:} In this case, let $m^{'}_{i} = m_{i} + a$ (for some $a > 0$) be the mining power corresponding to reported valuation $\hat{\theta}_{i}$. Consider $R^{'}_{i}$ be the random variable denoting reward when reported valuation is $\hat{\theta}_{i}$. Therefore, 
\[
    R_{i}^{'}= 
\begin{cases}
    r_{\text{b}} - \chi(a + m_{i}),& \text{with probability } \frac{m_{i}+a}{M+a}\\
    -\chi(a + m_{i}),              & \text{with probability} \frac{M - m_{i}}{M + a}
\end{cases}
\]
Now, we can write the expected difference in utility as 
\[
    \begin{aligned}
        \mathbb{E}[U_{i}(\hat{\theta}_{i},\bm{\theta}_{-i},A_{i},\emptyset;\theta_{i}) - U_{i}(\theta_{i},\bm{\theta}_{-i},\emptyset,\emptyset;\theta_{i})] & = \mathbb{E}[R^{'}_{i} - R_{i}]\\
         = r_{\text{b}}\left(\frac{m_{i} + a}{M + a} - \frac{m_{i}}{M}\right) - \chi (m_{i} + a - m_{i})\\
         = r_{\text{b}}\frac{a(M - m_{i})}{M(M + a)} - a\chi \\ 
         = \frac{a r_{\text{b}}}{M+a}\left(1 - \frac{m_{i}}{M} - \frac{\chi(M + a)}{r_{\text{b}}}\right) \\
         \leq \frac{a r_{\text{b}}}{M+a}\left(1 - \frac{m_{min}}{M} - \frac{M\chi}{r_{\text{b}}}\right) < 0
    \end{aligned}
\]
The last inequality uses the fact $\left(1 - \frac{m_{min}}{M}\right) < \frac{M\chi}{r_{\text{b}}}$. 

In conclusion, we have shown that if $1 - \frac{m_{min}}{M} < \frac{M\chi}{r_{\text{b}}} \leq 1$ is satisfied, then the PoW-based bootstrapping protocol (W2SB) is both Individually Rational (IR) and Incentive Compatible (IC). 
\end{proof}

\begin{figure}[!th]
    \centering
    \includegraphics[width=\linewidth]{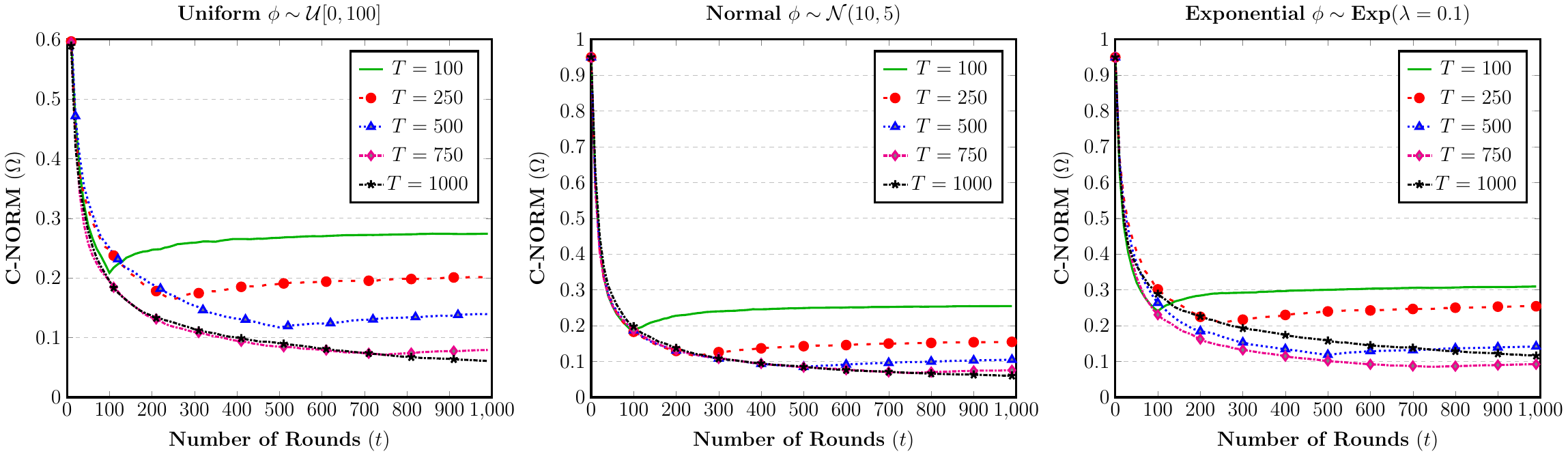}
    \caption{\ourmeasure\ against rounds for different $T$ and different distributions of miner stake}
    \label{fig:appendix-plots}     
\end{figure}

\section{Proof for Theorem~\ref{thm:round-complexity}}
\label{app:round-complexity}
\begin{proof}
To prove the theorem, we run the \proname\ for $T$ rounds. We show that for any arbitrary desired value $z \in (0,1)$ there always exists $T$ such that after $T$ rounds, $\Omega^{\star} \leq z$. Out of these $T$ rounds, the first $T_{0} < T$ rounds are such that the majority of players have joined the system before this round. This means $T_{0}$ is such that $\Psi(T_{0}) = 1 - \frac{x}{n}$ for $x < z$. Wlog. we assume players join the system in the order $p_{1},p_{2},\ldots,p_{n}$. The \ourmeasure\ value is therefore given by
\[
    \begin{aligned}
        \Omega^{\star} & = \frac{1}{2}\sum_{i=1}^{n} \Big|\beta_{i} - \frac{1}{n}\Big|\\
        & = \frac{1}{2}\sum_{i=1}^{\Psi(T_{0})n} \Big|\beta_{i} - \frac{1}{n}\Big| + \sum_{i=\Psi(T_{0})n}^{n} \Big|\beta_{i} - \frac{1}{n}\Big| \\
        & \leq \sum_{i=1}^{\Psi(T_{0})n} \Big|\beta_{i} - \frac{1}{n}\Big| + \sum_{i=\Psi(T_{0})n}^{n} 1 \\ & = \sum_{i=1}^{\Psi(T_{0})n} \Big|\beta_{i} - \frac{1}{n}\Big| + n\cdot(1 - \Psi(T_{0}))
    \end{aligned}
\]
We get the last inequality because $|a - b| \leq \max(|a|,|b|)$ and $|\beta_{i} - \frac{1}{n}| \leq \max(|\beta_{i}|,\frac{1}{n}) \leq 1$. Now we consider $\Big|\beta_{i} - \frac{1}{n}\Big|$, which gives us 
\[
    \begin{aligned}
        \Big|\beta_{i} - \frac{1}{n}\Big| & = \Big|\frac{\omega_{i}/\theta_{i}}{\sum_{j=1}^{n}\omega_{j}/\theta_{j}} - \frac{1}{n}\Big|\\
        & \leq \Big|\frac{T_{0}\cdot r_{\text{b}} + (\omega^{'}_{i}/\theta_{i})}{\sum_{j=0}^{n} \chi(T - T_{0})} - \frac{1}{n}\Big|\\
    \end{aligned}
\]
For the last inequality, (1) $\omega_{i}^{'}$ is the payoff from round $T_{0}$ to $T$, and before that any player can get at most $T_{0}r_{\text{b}}$. Therefore, we use the inequality $\omega_{i}/\theta_{i} \leq T_{0} + \omega_{i}^{'}/\theta_{i}$ to upper bound the numerator. (2) $\omega_{j}/\theta_{j} \geq \omega_{j}^{'}/\theta_{j} \geq (T - T_{0})\chi$ since \proname\ is IR means $\frac{\theta_{i}r_{\text{b}}}{\sum_{j=1}^{n}\theta_{j}} \geq \chi$ and therefore $\omega_{j}/\theta_{j} \geq \chi(T - T_{0})$ as the stake is allocated for $T - T_{0}$ rounds. 

We can also upper bound $\omega_{i}^{'}/\theta_{i}$ by considering the stake is distributed among only the players who have joined in round $\leq T_{0}$. In this case, $\omega_{i}^{'}/\theta_{i} = c\;\forall\;p_{i}$ joining before round $T_{0}$ for some constant $c$. We therefore get 
\[
    \begin{aligned}
        \Omega^{\star} & \leq n\Psi(T_{0})\Big|\frac{T_{0}r_{\text{b}} + (T - T_{0})c}{n\chi(T - T_{0})} - \frac{1}{n}\Big| + x\\
        & = n\Psi(T_{0})\Big|\frac{c}{n\chi} + \frac{T_{0}r_{\text{b}}}{n\chi(T - T_{0})} - \frac{1}{n}\Big| + x
    \end{aligned}
\]
When we increase $T$, we can reduce the mod term to a small enough value such that $\Omega^{\star} \leq z$. By IR we require $c = r_{\text{b}}/M \approx \chi/n$ (for very small $m_{min}$, see Lemma~\ref{lemma:pow-ic-ir}). Therefore, 
\[
    \begin{aligned}
        \Omega^{\star} &\lesssim \Psi(T_{0})\Big|\frac{T_{0}r_{\text{b}}}{\chi(T - T_{0})}\Big| + x &\leq z
        & \Rightarrow T \geq \frac{\Psi(T_{0})T_{0}r_{\text{b}}}{(z - x)\chi} + T_{0}
    \end{aligned}
\]

Thus, for any $z \in (0,1]$ we can always obtain a finite $T$ such that $\Omega^{\star} \leq z$ if \proname\ is run for $\geq T$ rounds, assuming dynamic participation according to some  distribution with CDF $\Psi$.
\end{proof}
\section{Experiments}
\label{app:plots}
The plots are shown in figure~\ref{fig:appendix-plots} for distributions with different parameters to show that the trend remains the same.

\section{Cycle Elimination Algorithm}
\label{app:cycle-elimination}
\paragraph{Cycle Elimination Procedure.} Without loss of generality, consider there exists a cycle with edges from $p_{1}$ to $p_{2}$, $p_{2}$ to $p_{3}$ and so on till $p_{k}$ to $p_{1}$. These cycles can be found using any cycle detection algorithm such as BFS, Floyd's algorithm. Let weight $w_{1,2}$ be the smallest of the weights. We eliminate the edge from $p_{1}$ to $p_{2}$ by subtracting the weight $w_{1,2}$ from each edge of the cycle. If there exist multiple cycles, the order of elimination will result in different DAGs. However,  the resultant value of the centralization metric (proposed in Section~\ref{ssec:ourmeasure}) does not change.  

\begin{figure}[t]
\begin{small}
  \begin{subfigure}{0.25\linewidth}
    \includegraphics[width=\linewidth, clip]{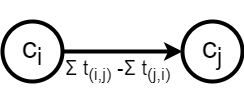}
    \caption{Edges} \label{fig:graph1}
  \end{subfigure}%
  \hspace*{2em}   
  \begin{subfigure}{0.25\linewidth}
    \includegraphics[width=\linewidth,clip]{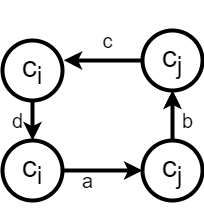}
    \caption{With cycles} \label{fig:graph2}
  \end{subfigure}%
  \hspace*{2em}   
  \begin{subfigure}{0.25\linewidth}
    \includegraphics[width=\linewidth, clip]{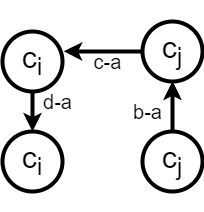}
    \caption{Eliminated Cycle} \label{fig:graph3}
  \end{subfigure}%
\caption{PoS System graph and cycle elimination.} \label{fig:graph-figure}
\end{small}
\end{figure}

\begin{algorithm}[th]
\small
\caption{\texttt{CycleElimination} procedure}\label{alg:cycle-elimination}
\begin{algorithmic}[1]
\Statex\textbf{Input:} $\mathcal{G} = (P,\mathcal{C},\mathcal{W})$ 
\color{blue}\Comment{PoS System Graph (might contain cycles)}\color{black}
\Statex\textbf{Output:} $\mathcal{G}^{'}$
\color{blue}\Comment{Directed Acyclic PoS System Graph}\color{black}
\For{each cycle $(p_{1},p_{2},\ldots,p_{k})$}
\State $w_{min} := \infty$
\For{each $p_{i},p_{j}$ in the cycle}
\State $w_{min} = min(w_{min}, w_{i,j})$
\EndFor
\For{each $p_{i},p_{j}$ in the cycle}
\State $w_{i,j} = w_{i,j} - w_{min}$
\EndFor
\EndFor
\State Updated weight set is $\mathcal{W}^{'}$
\State $\mathcal{G}^{'} := (P,\mathcal{C},\mathcal{W}^{'})$
\Statex\textbf{Return} $\mathcal{G}^{'}$
\end{algorithmic}
\end{algorithm}


\end{document}